\documentclass[conference]{IEEEtran} 
\IEEEoverridecommandlockouts
\usepackage{amsmath,amsfonts} 
\usepackage{amssymb} 
\usepackage{cases} 
\usepackage{relsize} 

\usepackage{algorithmic} 
\usepackage{algorithm} 

\usepackage{makecell} 
\usepackage{array} 
\usepackage[caption=false,font=normalsize]{subfig} 
\usepackage{stfloats} 
\usepackage{threeparttable} 
\usepackage{graphicx} 
\graphicspath{{./figs/}}

\usepackage{textcomp} 
\usepackage{url} 
\usepackage{verbatim} 
\usepackage{cite} 
\usepackage{enumitem} 

\usepackage{xcolor} 
\usepackage{color} 

\def\BibTeX{{\rm B\kern-.05em{\sc i\kern-.025em b}\kern-.08em T\kern-.1667em\lower.7ex\hbox{E}\kern-.125emX}}

\usepackage[switch]{lineno}
\let\oldequation\equation
\let\oldendequation\endequation
\renewenvironment{equation}
   {\linenomathNonumbers\oldequation}
   {\oldendequation\endlinenomath}
\let\oldalign\align
\let\oldendalign\endalign
\renewenvironment{align}
   {\linenomathNonumbers\oldalign}
   {\oldendalign\endlinenomath}
\let\oldnumcases\numcases
\let\oldendnumcases\endnumcases
\renewenvironment{numcases}[1]
   {\begingroup\linenomathNonumbers\oldnumcases{#1}}
   {\oldendnumcases\endlinenomath\endgroup}

\makeatletter  
\newcommand{\biggg}{\bBigg@{3}}  
\newcommand{\Biggg}{\bBigg@{3.5}}  
\newcommand{\bigggg}{\bBigg@{4}}  
\newcommand{\Bigggg}{\bBigg@{4.5}}  
\makeatother  

\newcommand{\Rx}{\mathbb{R}_{\xv}}

\newcommand{\xv}{{\bf x}}

\newcommand{\xvh}{{\widehat {\bf x}}}

\newcommand{\xvhs}{{\widehat {\bf x}}^{\star}}

\newcommand{\zv}{{\bf z}}
\newcommand{\zvs}{{\bf z}^{\star}}

\newcommand{\Kxh}{\mathsf{K}_{\xvh}}

\newcommand{\Kxs}{\mathsf{K}_{\xvh^{\star}}}

\newcommand{\dv}{{\sf d}}
\newcommand{\ev}{{\sf e}}
\newcommand{\av}{{\sf a}}

\newcommand{\vv}{{\sf v}}
\newcommand{\uv}{{\sf u}}

\newcommand{\Gammam}{\mathsf{\Gamma}}
\newcommand{\dsv}{\mathsf{d}^{\star}}
\newcommand{\Kz}{\mathsf{K}_{\zv}}
\newcommand{\Kzs}{\mathsf{K}_{\zv^{\star}}}

\newcommand{\Dm}{\mathsf{D}}

\newcommand{\Em}{{\sf E}}
\newcommand{\Dsm}{\mathsf{D}^{\star}}
\newcommand{\Am}{\mathsf{A}}
\newcommand{\Bm}{\mathsf{B}}
\newcommand{\Cm}{\mathsf{C}}

\newcommand{\Rm}{\mathsf{K}}

\newcommand{\tr}{\mathsf{tr}}
\newcommand{\rank}{\mathsf{r}}

\renewcommand{\det}{\mathsf{det}}
\newcommand{\diag}{\mathsf{diag}}

\newcommand{\zerov}{{\sf 0}}

\newcommand{\E}[2][]{\mathbb{E}_{#1}\left[#2\right]}

\newcommand{\dd}{{\, \rm{d}}}

\newcommand{\T}{\sf T}

\newcommand{\lambdav}{\hbox{\boldmath$\lambda$}}

\def\mindex#1{\index{#1}}

\def\sq{\hbox{\rlap{$\sqcap$}$\sqcup$}}
\def\qed{\ifmmode\sq\else{\unskip\nobreak\hfil
\penalty50\hskip1em\null\nobreak\hfil\sq
\parfillskip=0pt\finalhyphendemerits=0\endgraf}\fi\medskip}

\long\def\defbox#1{\framebox[.9\hsize][c]{\parbox{.85\hsize}{%
\parindent=0pt
\baselineskip=12pt plus .1pt      
\parskip=6pt plus 1.5pt minus 1pt 
#1}}}

\long\def\beginbox#1\endbox{\subsection*{}%
\hbox{\hspace{.05\hsize}\defbox{\medskip#1\bigskip}}%
\subsection*{}}
\def\endbox{}

\newsavebox{\junk}
\savebox{\junk}[1.6mm]{\hbox{$|\!|\!|$}}

\def\limsup{\mathop{\rm lim\ sup}}

\def\bfmath#1{{\mathchoice{\mbox{\boldmath$#1$}}%
{\mbox{\boldmath$#1$}}%
{\mbox{\boldmath$\scriptstyle#1$}}%
{\mbox{\boldmath$\scriptscriptstyle#1$}}}}

\def\bfmY{\bfmath{Y}}

\def\bfmhhaY{\bfmath{\hhaY}} 
\def\bfmhhaY{\hbox to 0pt{$\widehat{\bfmY}$\hss}\widehat{\phantom{\raise 1.25pt\hbox{$\bfmY$}}}}

\def\til={{\widetilde =}}

\def\FRAC#1#2#3{\genfrac{}{}{}{#1}{#2}{#3}}

\def\ddtp{{\mathchoice{\FRAC{1}{d^{\hbox to 2pt{\rm\tiny +\hss}}}{dt}}%
{\FRAC{1}{d^{\hbox to 2pt{\rm\tiny +\hss}}}{dt}}%
{\FRAC{3}{d^{\hbox to 2pt{\rm\tiny +\hss}}}{dt}}%
{\FRAC{3}{d^{\hbox to 2pt{\rm\tiny +\hss}}}{dt}}}}

\def\average#1,#2,{{1\over #2} \sum_{#1}^{#2}}

\def\eye(#1){{\bf(#1)}\quad}

\usepackage{amsthm}

\newtheoremstyle{mybold}
      {3pt}
      {3pt}
      {\itshape}
      {}
      {\bfseries}
      {.}
      { }
      {}

\newtheoremstyle{myremark}
      {3pt}
      {3pt}
      {\normalfont}
      {}
      {\bfseries}
      {.}
      { }
      {}

\theoremstyle{mybold}
\newtheorem{theorem}{Theorem}
\newtheorem{lemma}{Lemma}
\newtheorem{definition}{Definition}

\theoremstyle{myremark}
\newtheorem{remark}{Remark}

\def\eq#1/{(\ref{eq:#1})}

\newcommand{\beqn}[1]{\notes{#1}%
\begin{eqnarray} \elabel{#1}}

\newcommand{\eeqn}{\end{eqnarray} }

\newcommand{\beq}[1]{\notes{#1}%
\begin{equation}\elabel{#1}}

\newcommand{\eeq}{\end{equation}}

\def\bdes{\begin{description}}
\def\edes{\end{description}}

\newcounter{rmnum}

\newcounter{anum}

{\end{list}}

\def\ass(#1:#2){(#1\ref{#1:#2})}

\def\ritem#1{
\item[{\sf \ass(\current_model:#1)}]
}

\newenvironment{recall-ass}[1]{%
\begin{description}
\def\current_model{#1}}{
\end{description}
}

\newcounter{problem}
\newcounter{save@equation}
\newcounter{save@problem}
\makeatletter
\newenvironment{problem}
{\setcounter{save@equation}{\value{equation}} 
\setcounter{save@problem}{\value{problem}} 
\setcounter{problem}{\value{equation}} 
\let\c@equation\c@problem
\subequations
}
{\endsubequations
\setcounter{save@problem}{\value{equation}}%
\setcounter{equation}{\value{save@equation}}%
}

\begin{document}

\title{Joint Lossy Compression for a Vector Gaussian \\ Source under Individual Distortion Criteria}

\author{Shuao Chen, Junyuan Gao, Yuxuan Shi, Yongpeng Wu, Giuseppe Caire, H. Vincent Poor, and Wenjun Zhang 
\thanks{S. Chen, Y. Wu, and W. Zhang are with the Department of Electronic Engineering, Shanghai Jiao Tong
   University, Shanghai 200240, China (e-mails:
   \{shuao.chen, yongpeng.wu, zhangwenjun\}@sjtu.edu.cn)
   (Corresponding author: Yongpeng Wu).
}
\thanks{
   J. Gao is with the Department of Electrical and Electronic Engineering, The Hong Kong Polytechnic University, Hong Kong SAR, China (e-mail: junyuan.gao@polyu.edu.hk).
}
\thanks{
   Y. Shi is with the Department of Networked Intelligence, Pengcheng Laboratory, Shenzhen 410083, China (e-mail: shiyx01@pcl.ac.cn). 
}
\thanks{G. Caire is with the Communications and Information Theory Group, Technische Universit{\"a}t Berlin, Berlin 10587, Germany (e-mail: caire@tu-berlin.de).
}
\thanks{H. V. Poor is with the Department of Electrical and Computer Engineering, Princeton University, Princeton, NJ 08544, USA (e-mail: poor@princeton.edu).
}
}

\maketitle


\begin{abstract}
This paper investigates the joint compression problem of a vector Gaussian source, where an individual distortion constraint is imposed on each source component. 
It is known that the rate-distortion function (RDF) is lower-bounded by the rate derived from the Hadamard inequality, which becomes exact when the semidefinite condition (SDC) holds. 
However, existing works often overlook the case where the SDC is not satisfied.
Moreover, even when the SDC holds, a quantitative characterization of how correlations enable more efficient compression is lacking.
In this work, we refine the results when the SDC is satisfied and derive new theoretical results when the SDC is not satisfied, thereby establishing theoretical limits for practical source compression with correlations.
Specifically, we examine the properties of optimal source reconstruction and provide upper bounds on its dimension, showing that lower-dimensional reconstructions are essential for efficient compression when the SDC does not hold.
Within a scalable two-type correlation (2TC) covariance framework, we prove that the probability of satisfying the SDC decays exponentially with source length, emphasizing the importance of exploring theoretical limits when the SDC is not met.
Additional, we determine the component-wise correlations that a vector source should possess to achieve the Hadamard compression rate, revealing the trade-off between distortion constraints and correlations.
More importantly, by deriving an explicit RDF with correlations incorporated, we quantitatively characterize the gain in compression efficiency achieved by fully leveraging source correlations.
\end{abstract}


\begin{IEEEkeywords}
   Distortion criterion, Hadamard inequality, lossy compression, rate-distortion function, vector source.
\end{IEEEkeywords}


\section{Introduction}

The rapid growth of artificial intelligence (AI), Internet of Things (IoT), and edge computing have led to the generation of vast datasets containing components with heterogeneous significance, in which mission-critical elements coexist with non-essential data~\cite{baccour2022pervasive}. Efficient compression of heterogeneous data streams emerges as a critical enabler to meet stringent latency, transmission, and storage demands in next generation communication systems~\cite{chafii2023twelve}.
From an information-theoretic perspective, the fundamental trade-off in lossy compression lies between the compression rate and the fidelity of the reconstructed object~\cite{shannon1959coding}, and is mathematically characterized by the rate-distortion function (RDF)~\cite{berger1971rate}.

The choice of distortion criterion is key to establishing the RDF.
The commonly used distortion criteria include the sum and individual distortion criteria, as follows~\cite{gray1973new,oohama2014indirect}.

\begin{itemize}[leftmargin=10pt]
   \item \textbf{Sum distortion criterion:} It constrains the sum distortion between the vector source and its reconstruction across all \(N\) components, i.e., \(\sum_{i=1}^{N} d_{i} \leq \sum_{i=1}^{N} e_{i}\), where \(d_{i}\) are component-wise distortions and \(e_{i}\) are their corresponding constraints.
   The RDF is obtained by applying the reverse water-filling principle to allocate optimal distortions for the vector Gaussian source~\cite{thomas2006elements}.
   \item \textbf{Individual distortion criteria:}
   The core idea of these criteria is to impose distinct distortion constraints \(d_{i} \leq e_{i}\) on each component.
Widely used in network coding\cite{berger1978multiterminal,gamal1982achievable}, the criteria support varying fidelity needs for components and capture the impact of source correlations on compression rate. Thus, we adopt the criteria in this paper.
\end{itemize}

The current best known results on the RDF under the individual distortion criteria rely on Hadamard's inequality~\cite{horn2012matrix}, which provides a lower bound on the RDF, i.e., \(\Rx([N], \ev) \geq \mathbb{R}^{l}_{\xv}([N], \ev)\)
with the lower rate~\cite{xiao2005compression,lapidoth2010sending,tinguely2008transmitting}
\begin{align} \label{eq:hadamard bound value}
\mathbb{R}^{l}_{\xv}([N], \ev) = \frac{1}{2} \log \frac{\det(\Rm)}{\det(\Em)}.
\end{align}
In Eq.~\eqref{eq:hadamard bound value}, \(\Rm\) denotes the covariance matrix of the vector source, and \(\Em=\diag(e_{1},\cdots,e_{N})\) is a matrix with \(e_i\) representing the distortion constraints for the \(i\)-th component.
The lower rate gives the RDF if and only if the semidefinite condition (SDC)
\begin{align} \label{eq:SDC}
   \Rm \succeq \Em
\end{align}
is satisfied. Existing works often overlook the case where the SDC is not met, leaving it unexplored. Moreover, even when the SDC is satisfied, a quantitative characterization of how correlations enable more efficient compression is lacking.

In this paper, we investigate the joint compression of a vector Gaussian source under the individual distortion criteria. We refine existing results for the case where the SDC in Eq.~\eqref{eq:SDC} is satisfied and obtain new theoretical results for the case where the SDC is not satisfied.
Specifically, in Theorem~\ref{thm:det(Rm-Dsm)}, we provide the properties of optimal source reconstruction under arbitrary covariance and distortion constraints.
We then establish upper bounds on the dimension of the optimal reconstruction in Theorem~\ref{thm:inertia relation}, and find that efficient compression necessitates a lower-dimensional reconstruction when the SDC does not hold.  
Next, we examine the SDC and the corresponding RDF within a scalable two-type correlation (2TC) covariance. In Theorem~\ref{thm:RE_psd_conditions}, we quantitatively establish the theoretical limits for improving compression efficiency by fully exploiting source correlations.
Additionally, we equivalently transform the SDC into a set of scalar inequalities, enabling its analysis from the perspectives of distortion constraints and source correlations. From the distortion-constraint perspective, in Theorem~\ref{thm:sdc exponential asymp}, we prove that the probability of satisfying the SDC decays exponentially with source length, calling for more analytical results for the case where the SDC is not met.
From the source-correlation perspective, in Theorem~\ref{thm:max rho0}, we determine the quantitative component-wise correlations that a vector source should have to achieve the compression rate in Eq.~\eqref{eq:hadamard bound value}.
In summary, we establish new theoretical limits under individual distortion constraints, providing both quantitative results and valuable insights on how correlations enable more efficient compression.

\subsubsection*{Notation}

Uppercase and lowercase boldface letters denote random matrices and vectors. We use sans-serif letters to denote deterministic matrices and vectors.
\(\rank(\av)\) denotes the number of non-zero entries in a vector \(\av\). We use \(\tr(\Am)\), \(\det(\Am)\), \(\rank(\Am)\), \(\diag(\Am)\), and \(\Am^{\T}\) to denote the trace, determinant, rank, diagonal elements, and transposition of a matrix \(\Am\), respectively.
We denote \([x]^+ = \max\{x, 0\}\). 
For an integer \(k > 0\), we denote \([k]=\{1, 2, \cdots, k\}\).
For a real symmetric matrix \(\mathsf{A}\), let \(\mathsf{A}[\mathcal{I}]\) be the principal submatrix indexed by \(\mathcal{I} \subseteq [N]\).
We use the triplet \((n_+,n_-,n_0)\) to denote the counts of positive, negative, and zero eigenvalues of \( \mathsf{A} \), including multiplicities.
For functions \(f(x)\) and \(g(x)\), \( f(x) = O(g(x)) \) means \( \limsup_{x \to \infty} \left|f(x)/g(x) \right| < \infty \), and \( f(x) = \Theta(g(x)) \) means \( \limsup_{x \to \infty} \left|f(x)/g(x) \right| = c \) where \( 0 < c < \infty \).

\section{System Model}
\label{sec:System Model}


\subsection{Setup} \label{subsec:Setup}
Consider an \(N\)-length vector Gaussian source \(\xv \sim \mathcal{N}(\mathsf{0}, \Rm)\) defined on the alphabet \(\mathcal{X}^N\), where each component of the source \(x_i\) is variance-normalized for \(i \in [N]\).
The eigenvalue decomposition (EVD) of the covariance matrix \(\Rm \succ \zerov\) is given by \(\Rm = \mathsf{U} \mathsf{\Lambda} \mathsf{U}^{\T}\), where \(\mathsf{\Lambda} = \diag(\lambda_1, \lambda_2, \cdots, \lambda_N)\) and \(\mathsf{U}\) consists of the corresponding eigenvectors.\footnote{Any positive semidefinite matrix with zero eigenvalues, termed degenerate, can be projected onto a positive definite matrix subspace~\cite{anderson2003introduction}.}
For the joint lossy compression of the vector Gaussian source \(\xv\), the distortion is quantified by the quadratic distortion matrix
\begin{align} \label{eq:distortion matrix definition}
   \Dm = \E{(\xv - \widehat{\xv})(\xv - \widehat{\xv})^{\T}},
\end{align}
where \(\widehat{\xv}\) is the reconstruction of \(\xv\).
For each component \(x_{i}\), the distortion \(d_i\) (i.e., the \(i\)-th diagonal element of \(\Dm\)) satisfies \(d_i \leq e_i\) with the distortion constraint \(e_i \in (0,1]\).
The vector \(\ev = [e_1, e_2, \cdots, e_N]^{\T}\) represents the ordered sequence of individual distortion constraints.
\begin{remark}
   Although we assume each component variance is normalized, it is sufficiently general.
Let \(\widetilde{\xv} \sim \mathcal{N}(\mathsf{0}, \mathsf{\Sigma}_{\widetilde{\xv}})\) be any non-degenerate \(N\)-length source with covariance \(\mathsf{\Sigma}_{\widetilde{\xv}} = \mathsf{V}^{\frac{1}{2}} \Rm \mathsf{V}^{\frac{1}{2}}\), where \(\mathsf{V} = \diag(\sigma_1^2, \sigma_2^2, \cdots, \sigma_N^2)\) is the variance matrix composed of individual variances \(\sigma_i^2 = \mathbb{E}[\widetilde{x}_i^2]\).
For any \(\widetilde{d}_i > 0\), \(i \in [N]\), imposing a quadratic distortion constraint \(\mathbb{E}[(\widetilde{x}_i - \hat{\widetilde{x}}_i)^2] \le \widetilde{d}_i\) on \(\widetilde{x}_i \sim \mathcal{N}(0,\sigma_i^2)\) is equivalent to \(\mathbb{E}[(x_i - \hat{x}_i)^2] \le e_i\), where \(e_i = \widetilde{d}_i/\sigma_i^2\), \(x_i = \sqrt{e_i}\, \widetilde{x}_i/\sqrt{\widetilde{d}_i}\), and \(\hat{x}_i = \sqrt{e_i}\, \hat{\widetilde{x}_i}/\sqrt{\widetilde{d}_i}\). We then have \(x_i \sim \mathcal{N}(0,1)\) for any \(i \in [N]\). 
\hfill\ensuremath{\lozenge}
\end{remark}


\subsection{Definition of RDF}
\label{subsec:Definition of RDF}
Let \(\{\xv(t)\}_{t=1}^\infty\) be an independent and identically distributed (i.i.d.) vector-valued Gaussian source with each \(\xv(t) \sim \mathcal{N}(\zerov, \Rm)\) for all \(t \geq 1\). 
We denote \(k\) i.i.d. realizations of the \(i\)-th source component as \(\xv_i^k \triangleq [x_i(1), \cdots, x_i(k)]\), and define the concatenated source and reconstruction vectors across all \(N\) source components as \(\mathbf{x}^k \triangleq [\mathbf{x}_1^k, \cdots, \mathbf{x}_N^k]^{\T} \in \mathcal{X}^{Nk}\) and \(\widehat{\mathbf{x}}^k \triangleq [\widehat{\mathbf{x}}_1^k, \cdots, \widehat{\mathbf{x}}_N^k]^{\T} \in \widehat{\mathcal{X}}^{Nk}\), respectively.
An \((N, k, 2^{kR})\) lossy code consists of a common encoder \(\mathsf{f}: \mathcal{X}^{Nk} \mapsto \{1,2, \cdots, \lfloor 2^{kR} \rfloor\}\) and a decoder \(\mathsf{g}: \{1,2, \cdots, \lfloor 2^{kR} \rfloor\} \mapsto \widehat{\mathcal{X}}^{Nk}\).
The blockwise average distortion is defined as
\(d (\xv_{i}^{k}, \widehat{\xv}_{i}^{k}) \triangleq \frac{1}{k} \sum_{t=1}^k (x_{i}(t) - \widehat{x}_i(t))^{2}\).
We then define the RDF under the individual distortion criteria.
\begin{definition} \label{def:vector_rdf}
   The pair \((R, \ev)\) is said to be achievable if there exists a sequence of \((N, k, 2^{kR})\) codes such that for all \(\varepsilon_i > 0\) with \(i \in [N]\) and all sufficiently large \(k\),
\begin{align} \label{eq:vector criterion}
\E{d(\xv_{i}^{k}, \widehat{\xv}_{i}^{k})} \leq e_{i} + \varepsilon_{i},
\end{align}
holds. The rate-distortion function under the individual distortion criteria is defined as
\begin{align} \label{eq:vector_rdf_definition}
   \Rx([N], \ev) \triangleq \inf\{ R : (R, \ev) \text{ is achievable} \}.
\end{align}
\end{definition}
For the sum distortion criterion, the RDF can be obtained by replacing the distortion fidelity in Eq.~\eqref{eq:vector criterion} with the sum distortion.
The following lemma provides an optimization formulation of RDF under the individual distortion criteria.
\begin{lemma}[Formulation of RDF under the individual distortion criteria,~\cite{xiao2005compression}] \label{lemma:maxdet}
RDF in Eq.~\eqref{eq:vector_rdf_definition} is given by the solution to
\begin{problem} \label{problem:maxdet}
\begin{align}
\min_{\Dm} \quad & \frac{1}{2} \log \det (\Dm^{-1} \Rm), \label{eq:maxdet obj} \\
\mathrm{s.t.} \quad & \dv \le \ev, \label{eq:maxdet c1} \\
& \zerov \prec \Dm \preceq \Rm, \label{eq:maxdet c2}
\end{align}
\end{problem}
where \(\dv = \diag(\Dm)\) is the vector of diagonal elements of \(\Dm\), and \(\ev = [e_1, \cdots, e_N]^{\T}\) is the normalized distortion constraint vector. \(\dv \le \ev\) means \(d_i \le e_i\) for all \(i \in [N]\).
\end{lemma}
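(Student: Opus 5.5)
We will reduce the operational definition to the single-letter characterization and then to a covariance-matrix optimization. The first step is the rate-distortion theorem with multiple per-letter distortion measures: for i.i.d. sources and finitely many additive distortion functions \(d_i(\xv,\widehat{\xv}) = (x_i-\widehat{x}_i)^2\), the RDF in Eq.~\eqref{eq:vector_rdf_definition} equals
\begin{align*}
\min_{p(\widehat{\xv}|\xv):\, \mathbb{E}[(x_i-\widehat{x}_i)^2]\le e_i,\ i\in[N]} I(\xv;\widehat{\xv}).
\end{align*}
Achievability is proved by joint-typicality random coding with each per-component constraint checked separately. The converse is the usual chain \(kR \ge I(\xv^k;\widehat{\xv}^k) \ge \sum_t I(\xv(t);\widehat{\xv}(t))\), combined with convexity of the single-letter function in the vector \(\ev\) and time averaging of each component distortion. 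This step is standard and we take it as known.

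Next we lower-bound \(I(\xv;\widehat{\xv})\) for any feasible test channel. Let \(\Dm\) be the error covariance in Eq.~\eqref{eq:distortion matrix definition}. Then
\begin{align*}
I(\xv;\widehat{\xv}) = h(\xv)-h(\xv|\widehat{\xv}) \ge h(\xv)-h(\xv-\widehat{\xv}) \ge \tfrac12\log\det\Rm-\tfrac12\log\det\Dm,
\end{align*}
using that conditioning reduces entropy and that the Gaussian maximizes entropy for a given covariance. The constraints give \(d_i\le e_i\). We may replace \(\widehat{\xv}\) by \(\mathbb{E}[\xv|\widehat{\xv}]\): this does not increase \(I\) (data processing) and does not increase any diagonal distortion (MMSE). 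Then the error is orthogonal to the estimate, so \(\Rm = \Dm + \mathrm{Cov}(\widehat{\xv})\), which gives \(\Dm\preceq\Rm\).

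\(\Dm\succ\zerov\) can be assumed, since a singular \(\Dm\) would make the bound infinite while a finite rate is attainable. A cleaner argument: the objective is \(+\infty\) on singular \(\Dm\), so the infimum is unaffected. Hence the single-letter RDF is at least the value of Problem~\eqref{problem:maxdet}.

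For the matching upper bound, take any \(\Dm\) feasible for Problem~\eqref{problem:maxdet}. Construct the backward Gaussian channel \(\xv=\widehat{\xv}+\zv\) with \(\widehat{\xv}\sim\mathcal{N}(\zerov,\Rm-\Dm)\) independent of \(\zv\sim\mathcal{N}(\zerov,\Dm)\); this is valid because \(\Rm-\Dm\succeq\zerov\). Then \(\xv\sim\mathcal{N}(\zerov,\Rm)\), the diagonal distortions are \(d_i\le e_i\), and \(I(\xv;\widehat{\xv})=\tfrac12\log\det(\Dm^{-1}\Rm)\) exactly. The remaining subtlety is existence of a minimizer. The feasible set is convex and closed except at the boundary \(\Dm\succ\zerov\). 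The objective \(-\log\det\Dm\) is convex and blows up toward singular matrices, while \(\Dm\) is bounded by \(\Rm\). So the infimum is attained, and \(\min\) is justified.

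The main obstacle is the first step's converse with multiple distortion constraints, specifically convexity of the single-letter function in \(\ev\). We handle it by time-sharing between test channels, which shows the feasible region of \((I,\ev)\) pairs is convex. The rest is routine.
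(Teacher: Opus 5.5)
Your proposal is correct and follows the standard argument behind this lemma. The paper gives no proof of its own and only cites Xiao and Luo, where the Max-Det form is obtained as you do: single-letterize, lower-bound $I(\xv;\widehat{\xv})$ with the Gaussian maximum-entropy bound after replacing $\widehat{\xv}$ by $\mathbb{E}[\xv|\widehat{\xv}]$ (which gives $\Dm\preceq\Rm$), and attain the bound with the backward Gaussian test channel $\xv=\widehat{\xv}+\zv$. As a small simplification, the convexity in $\ev$ that you flag as the main obstacle follows directly from the Max-Det form itself (convex objective, constraints jointly linear in $(\Dm,\ev)$); you establish that form at the single-letter level without using convexity, so the argument is not circular.
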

\stepcounter{equation}

The matrix constraint in Eq.~\eqref{eq:maxdet c2} ensures the achievability of the RDF.
We denote the vector \(\dsv = \diag(\Dsm)\), where \(\Dsm\) is the optimal solution to the Max-Det problem in Eq.~\eqref{problem:maxdet}.

\setcounter{theorem}{0}

\section{Source Compression with Arbitrary Covariance}
\label{sec:Source Compression with Arbitrary Covariance}


For the vector Gaussian source compression problem under the individual distortion criteria, the source \(\xv\) can be modeled through a backward test channel as \(\xv = \xvh + \zv\)~\cite{xiao2005compression,nayak2010successive}, where the reconstruction \(\xvh \sim \mathcal{N}(\mathsf{0}, \Kxh)\) and the noise \(\zv \sim \mathcal{N}(\mathsf{0}, \Kz)\). Since \(\xvh\) and \(\zv\) are independent, their covariance matrices satisfy \(\Rm = \Kxh + \Kz\).
When the RDF is achieved, we denote \(\xvh = \xvhs\) and \(\zv = \zvs\), with the positive definite covariance matrix \(\Kzs = \Dsm\). Therefore, the covariance of the optimal reconstruction \(\xvhs\) is \(\Kxs = \Rm - \Dsm\). The following theorem connects the property of \(\xvhs\) to the SDC in Eq.~\eqref{eq:SDC}.
\begin{theorem} \label{thm:det(Rm-Dsm)}
   For the covariance matrix \(\Kxs\) of the optimal reconstruction \(\xvhs\) achieving the RDF, we have
\begin{align} \label{eq:det(Rm-Dsm)}
   \det(\Kxs)
   \begin{cases}
   > 0, & \text{if SDC is satisfied and inactive}, \\
   = 0, & \text{otherwise}.
   \end{cases}
\end{align}
\end{theorem}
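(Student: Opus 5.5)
We interpret ``SDC satisfied and inactive'' as \(\Rm \succ \Em\) strictly (so that \(\Em\) lies in the interior of the feasible region of Eq.~\eqref{eq:maxdet c2}). The plan is to work with the KKT conditions of the Max-Det problem in Lemma~\ref{lemma:maxdet} and argue in two directions. The first direction is the easy case. If \(\Rm \succ \Em\), the Hadamard bound in Eq.~\eqref{eq:hadamard bound value} shows \(\Dsm = \Em\) is optimal: for any feasible \(\Dm\), \(\det(\Dm) \le \prod_i d_i \le \prod_i e_i = \det(\Em)\), and \(\Em\) itself is feasible. Uniqueness of the optimizer follows from strict concavity of \(\log\det\) on the convex feasible set. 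Hence \(\Kxs = \Rm - \Em \succ \zerov\) and \(\det(\Kxs) > 0\).

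For the converse we show that \(\det(\Kxs) > 0\) forces the SDC to hold and be inactive, so that in every other case \(\det(\Kxs)=0\). Note that \(\Kxs = \Rm - \Dsm \succeq \zerov\) always, so its determinant is nonnegative; we only need to exclude positivity. Suppose \(\Rm - \Dsm \succ \zerov\). Then the constraint \(\Dm \preceq \Rm\) is inactive at \(\Dsm\), and \(\Dsm \succ \zerov\) is also inactive. The Lagrangian is
\[
-\log\det\Dm + \sum_i \mu_i (d_i - e_i) + \tr\big(\mathsf{\Phi}(\Dm - \Rm)\big),
\]
with \(\mathsf{\Phi} \succeq \zerov\), and complementary slackness \(\mathsf{\Phi}(\Rm-\Dsm)=\zerov\) forces \(\mathsf{\Phi}=\zerov\). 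Stationarity then gives \((\Dsm)^{-1} = \diag(\mu_1,\dots,\mu_N)\). So \(\Dsm\) is diagonal with \(\mu_i = 1/d^\star_i > 0\), and complementary slackness on the diagonal constraints gives \(d^\star_i = e_i\). Therefore \(\Dsm = \Em\) and \(\Rm - \Em \succ \zerov\), i.e., the SDC holds and is inactive. Since Slater's condition holds (take \(\Dm = \epsilon \mathsf{I}\) small), the KKT conditions are necessary, which justifies the argument.

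The main obstacle is the boundary case, \(\Rm \succeq \Em\) with \(\Rm - \Em\) singular (SDC satisfied but active). There \(\Dsm = \Em\) by the same Hadamard argument, so \(\det(\Kxs) = \det(\Rm-\Em) = 0\) directly, consistent with the ``otherwise'' branch. I would state this explicitly, fixing the meaning of ``active'' to be \(\det(\Rm-\Em)=0\). Combining the three cases (strict SDC, active SDC, SDC violated, where the converse argument applies) yields Eq.~\eqref{eq:det(Rm-Dsm)}.
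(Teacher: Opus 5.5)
Your proposal is correct and is essentially the paper's proof. In both, the core step is that $\Rm-\Dsm\succ\zerov$ forces the multiplier of the constraint $\Dm\preceq\Rm$ to vanish by complementary slackness, so stationarity makes $\Dsm$ diagonal and diagonal slackness gives $\Dsm=\Em$. Your Hadamard-plus-uniqueness argument for $\Rm\succ\Em\Rightarrow\Dsm=\Em$, and your contrapositive for the ``otherwise'' branch, make explicit what the paper gets from KKT sufficiency and its $\mathsf{P}\neq\zerov$ contradiction.
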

\begin{IEEEproof}
We verify the Slater condition to ensure strong duality of the convex problem in Lemma~\ref{lemma:maxdet}. For a bounded and non-trivial \(\Rx([N], \ev)\), the constraint in Eq.~\eqref{eq:maxdet c1} is affine, and there exists a feasible \(\Dm\) in the relative interior of Eq.~\eqref{eq:maxdet c2}. Thus, strict feasibility is satisfied, and the Karush-Kuhn-Tucker (KKT) conditions are both necessary and sufficient for global optimality~\cite{boyd2004convex}. The Lagrangian that omits the constant \(\log (\det(\Rm))/2 \) for the convex Max-Det problem is given by \(\mathcal{L}(\Dm; \mathsf{P}, \mathsf{Q}) = -\frac{1}{2} \log \det(\Dm) - \frac{1}{2} \tr(\mathsf{P} (\Rm - \Dm))- \frac{1}{2} \tr(\mathsf{Q} (\Em - \Dm))\),
where \(\mathsf{P}, \mathsf{Q} \succeq \zerov\), and \(\mathsf{Q}\) is diagonal.
From the stationarity condition, the optimal distortion matrix satisfies
\begin{align} \label{eq:optimal D}
   \Dsm = (\mathsf{P} + \mathsf{Q})^{-1}.
\end{align}
Since \(\Dsm\) is symmetric and \(\mathsf{Q}\) is diagonal, \(\mathsf{P}\) is symmetric. 

\textbf{Case I:} If the SDC is satisfied and inactive, i.e., \(\Rm - \Dsm \succ \zerov\), the complementary slackness condition
\begin{align} \label{eq:slackness1}
   \mathsf{P} (\Rm - \Dsm) = \zerov
\end{align}
implies \(\mathsf{P} = \zerov\). Consequently, since \(\mathsf{Q} \succ \zerov\) is diagonal, it follows that \(\Dsm = \mathsf{Q}^{-1} \succ \zerov\) is diagonal, the other complementary slackness condition
\begin{align} \label{eq:slackness2}
   \mathsf{Q} (\ev - \dsv) = \zerov
\end{align}
leads to \(\dsv = \ev\) and \(\Dsm = \Em\), thereby completing the first part.

\textbf{Case II:} On the boundary of \(\Rm - \Em \succeq \zerov\), i.e., the SDC is satisfied and active, \(\Rm - \Dsm\) is singular.
In this case, \(\Dsm = \Em\) still satisfies the KKT conditions. 
If the SDC is not satisfied, i.e., \(\Rm - \Em \not\succeq \zerov\), Eq.~\eqref{eq:slackness1} no longer implies \(\mathsf{P} = \zerov\). Assuming that \(\mathsf{P} = \zerov\) still holds, Eq.~\eqref{eq:optimal D} implies that \(\Dsm\) is diagonal, and Eq.~\eqref{eq:slackness2} directly gives \(\Dsm = \Em\). This would revert to Case I, contradicting the primal feasibility constraints. We thus have \(\Dsm \neq \Em\) and \(\mathsf{P} \neq \zerov\). From Eq.~\eqref{eq:slackness1}, we obtain the equivalent condition \(\det(\Rm - \Dsm) = 0\) for the second part.
\end{IEEEproof}

In Theorem~\ref{thm:det(Rm-Dsm)}, we establish that the rate in Eq.~\eqref{eq:hadamard bound value} yields an exact RDF, i.e. the SDC is satisfied, if and only if the optimal source reconstruction is non-degenerate.
Conversely, any non-degenerate reconstruction over the alphabet \(\widehat{\mathcal{X}}^{N}\) cannot achieve the Hadamard compression rate. In other words, fewer than \(N\) independent components are sufficient to reconstruct a source with \(N\) independent components. 

Theorem~\ref{thm:det(Rm-Dsm)} examines whether \(\Kxs\) is full rank. The following theorem further gives upper bounds on the rank of \(\Kxs\), which is the number of non-trivial independent components in the optimal reconstruction.

\begin{theorem} \label{thm:inertia relation}
   The dimension of the optimal reconstruction, i.e., the rank of its covariance matrix, is upper-bounded as
\begin{align} \label{eq:inertia relation}
   \rank(\Kxs) &\leq \min\{ N - \rank(\ev - \dsv), n_+(\Rm - \Em)\}.
\end{align}
\end{theorem}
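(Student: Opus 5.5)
We prove the two bounds in \eqref{eq:inertia relation} separately, reusing the KKT structure from the proof of Theorem~\ref{thm:det(Rm-Dsm)}. There we had $\Dsm = (\mathsf{P}+\mathsf{Q})^{-1}$ with $\mathsf{P}\succeq \zerov$ symmetric and $\mathsf{Q}\succeq\zerov$ diagonal. We also had the complementary slackness conditions $\mathsf{P}(\Rm-\Dsm)=\zerov$ from \eqref{eq:slackness1} and $\mathsf{Q}(\ev-\dsv)=\zerov$ from \eqref{eq:slackness2}.

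\textbf{First bound.} Since $\Kxs=\Rm-\Dsm\succeq\zerov$ and $\mathsf{P}\Kxs=\zerov$, the range of $\Kxs$ lies in the null space of $\mathsf{P}$. Hence
\begin{align*}
\rank(\Kxs)\le N-\rank(\mathsf{P}).
\end{align*}
It therefore suffices to show $\rank(\mathsf{P})\ge \rank(\ev-\dsv)$. Let $\mathcal{I}=\{i: d^\star_i<e_i\}$, so $|\mathcal{I}|=\rank(\ev-\dsv)$. By \eqref{eq:slackness2}, $q_{ii}=0$ for every $i\in\mathcal{I}$.

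Consider the principal submatrix of $\mathsf{P}+\mathsf{Q}=\Dsm^{-1}\succ\zerov$ on $\mathcal{I}$. Because $\mathsf{Q}$ vanishes there, this submatrix equals $\mathsf{P}[\mathcal{I}]$. A principal submatrix of a positive definite matrix is positive definite, so $\mathsf{P}[\mathcal{I}]\succ\zerov$. This gives $\rank(\mathsf{P})\ge\rank(\mathsf{P}[\mathcal{I}])=|\mathcal{I}|$, which is the first bound.

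\textbf{Second bound.} We aim to show $\rank(\Kxs)\le n_+(\Rm-\Em)$. Since $\dsv\le\ev$ and $\Kxs\succeq \zerov$, the number of positive eigenvalues of $\Kxs$ equals its rank. Let $\mathcal{S}$ be the range of $\Kxs$, of dimension $r$. For $v\in\mathcal{S}$ with $v\neq 0$, we would like $v^{\T}(\Rm-\Em)v>0$. Courant--Fischer (inertia via maximal positive subspaces) would then give $n_+(\Rm-\Em)\ge r$.

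Write
\begin{align*}
v^{\T}(\Rm-\Em)v = v^{\T}\Kxs v + v^{\T}(\Dsm-\Em)v.
\end{align*}
The first term is positive on $\mathcal{S}$, but $\Dsm-\Em$ is not sign-definite, so the cross term needs control. This is the main obstacle.

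The plan is to use the KKT structure once more. On $\mathcal{S}\subseteq\ker\mathsf{P}$ we have $\Dsm^{-1}v=\mathsf{Q}v$, i.e.\ $v=\Dsm\mathsf{Q}v$. Substituting $v=\Dsm w$ with $w=\mathsf{Q}v$ gives
\begin{align*}
v^{\T}(\Dsm-\Em)v=w^{\T}\Dsm(\mathsf{Q}^{-1}\text{-like terms})\ldots
\end{align*}
More cleanly, $v^{\T}\Dsm^{-1}v=v^{\T}\mathsf{Q}v$. Also $w$ is supported on the complement of $\mathcal{I}$, where $q_{ii}(e_i-d^\star_i)=0$.

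We would try to compare $v^{\T}\Dsm v$ with $v^{\T}\Em v$ by working in the $w$-coordinates. The steps are to express $v^{\T}\Rm v$ and $v^{\T}\Em v$ through $w$, and then to use a Cauchy--Schwarz inequality for $\Dsm$ against its diagonal, which equals $\Em$ on the support of $w$. The hoped-for inequality is $v^{\T}(\Rm-\Em)v\ge v^{\T}\Kxs v\cdot c$ for some $c>0$.

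If that direct route fails, the fallback is a perturbation and continuity argument. We would use Sylvester's law of inertia on the congruence $\Dsm^{-1/2}(\Rm-\Dsm)\Dsm^{-1/2}$ and compare it with $\Em$ restricted to the coordinates where the diagonal constraints are tight. The goal is to show that the positive inertia of $\Rm-\Dsm$ cannot exceed that of $\Rm-\Em$. Combining the two bounds yields \eqref{eq:inertia relation}.
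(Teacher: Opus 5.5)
Your first bound is correct and complete. From $\mathsf{P}\Kxs=\zerov$ you get $\rank(\Kxs)\le N-\rank(\mathsf{P})$. Because $q_{ii}=0$ on the inactive set $\mathcal{I}$, $\mathsf{P}[\mathcal{I}]=(\Dsm^{-1})[\mathcal{I}]\succ\zerov$, so $\rank(\mathsf{P})\ge\rank(\ev-\dsv)$. This is a different route from the paper's. The paper passes to a ``reduced source'' of length $\rank(\Kxs)$ with non-degenerate reconstruction and applies Case I of Theorem~\ref{thm:det(Rm-Dsm)} to get $\bar{\Dm}^{\star}=\bar{\Em}$ there. It then counts active constraints, and uses Cauchy interlacing for the $n_+$ term. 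Your argument via the rank of the multiplier is more direct, and it avoids having to justify the reduced-source construction.

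The second bound, however, is not proved. After ``This is the main obstacle'' you give only three things: a hoped-for inequality with an unspecified constant, an unfinished display, and a fallback through the inertia of $\Dsm^{-1/2}(\Rm-\Dsm)\Dsm^{-1/2}$ that is never related to $\Rm-\Em$. The problem is the choice of test subspace. On $\mathcal{S}$ the KKT conditions only give $\Dsm^{-1}v=\mathsf{Q}v$, which controls $\Dsm^{-1}$ but not $\Dsm-\Em$.

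The missing idea is to evaluate $\Rm-\Em$ at $w=\Dsm^{-1}v=\mathsf{Q}v$ rather than at $v$; you introduced this $w$ but did not use it. That means working on $\mathcal{W}=\Dsm^{-1}\mathcal{S}$, which still has dimension $r$. Two facts drive the computation. First, $w^{\T}\Dsm w=v^{\T}\mathsf{Q}v$. Second, $w$ is supported where $q_{ii}>0$, and there complementary slackness gives $e_i=d_i^\star$. Together these give
\[
w^{\T}(\Rm-\Em)w=w^{\T}\Kxs w+\sum_{i=1}^{N}q_{ii}\bigl(1-d_i^\star q_{ii}\bigr)v_i^2 .
\]
The sum is nonnegative because $d_i^\star q_{ii}\le 1$. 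Indeed, $1/d_i^\star=\min_{x:\,x_i=1}x^{\T}\Dsm^{-1}x\ge\min_{x:\,x_i=1}x^{\T}\mathsf{Q}x=q_{ii}$, using $\Dsm^{-1}=\mathsf{P}+\mathsf{Q}\succeq\mathsf{Q}$. The first term is strictly positive for $v\neq\zerov$: since $w^{\T}v=v^{\T}\Dsm^{-1}v>0$, we have $w\notin\mathcal{S}^{\perp}=\ker\Kxs$, so $w^{\T}\Kxs w>0$.

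Hence $\Rm-\Em$ is positive definite on an $r$-dimensional subspace, and Courant--Fischer gives $n_+(\Rm-\Em)\ge\rank(\Kxs)$. Without this step, or an equivalent one, your proposal establishes only the first term of the minimum.
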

\begin{IEEEproof}
Recalling that \(\Kxs \succeq \zerov\), only the \(\rank(\Kxs)\) independent components in the optimal reconstruction \(\widehat{\xv}^{\star}\) are necessary to achieve the desired distortion fidelity. Relative to the original source \(\xv\), there exists a reduced Gaussian source \(\bar{\xv} \sim \mathcal{N}(\mathsf{0}, \bar{\Rm})\) with length \(\rank(\bar{\mathsf{K}}_{\xvh^{\star}})\) and a corresponding vector distortion constraint \(\bar{\ev}\). Let its optimal reconstruction be \(\bar{\widehat{\xv}}^{\star} \sim \mathcal{N}(\mathsf{0}, \bar{\mathsf{K}}_{\xvh^{\star}})\), which is non-degenerate because \(\rank(\bar{\mathsf{K}}_{\xvh^{\star}}) \leq \rank(\Kxs)\). Each component of \(\bar{\xv}\) satisfies the optimal test channel \(\bar{x}_i = \bar{\hat{x}}_i^{\star} + \bar{z}_i^{\star}\) for \(1 \leq i \leq \rank(\Kxs)\). The non-degeneracy of \(\bar{\widehat{\xv}}^{\star}\) leads to \(\bar{\Rm} - \bar{\Dm}^{\star} \succ \zerov\). From the argument in Case I of the proof of Theorem~\ref{thm:det(Rm-Dsm)}, it follows that
\begin{align} \label{eq:inertia relation1}
\bar{\Dm}^{\star} = \bar{\Em},
\end{align}
and thus we obtain \(\bar{\Rm} - \bar{\Em} \succ \zerov\). On the other hand, for the matrix \(\Rm - \Em\), since a submatrix of dimension \(\rank(\bar{\mathsf{K}}_{\xvh^{\star}})\) is strictly positive definite, it follows that
\begin{align} \label{eq:inertia relation2}
   \rank(\bar{\mathsf{K}}_{\xvh^{\star}}) \leq n_+(\Rm - \Em).
\end{align}
Cauchy's interlace theorem states~\cite{hwang2004cauchy} that for a symmetric matrix \(\Am \in \mathbb{R}^{N \times N}\) with eigenvalues \(\lambdav_i(\Am)\) (in non-decreasing order) and its principal submatrix \(\Bm \in \mathbb{R}^{n \times n}\) (\(n < N\)) with eigenvalues \(\lambdav_i(\Bm)\), we have \(\lambdav_i(\Am) \leq \lambdav_i(\Bm) \leq \lambdav_{i+N-n}(\Am)\).  
Due to \(\lambdav_1(\bar{\mathsf{K}}_{\xvh^{\star}}) > 0\) and setting \(i = 1\), we conclude \(\lambdav_{1+N-n}(\Rm - \Em) > 0\), where \(n = \rank(\bar{\mathsf{K}}_{\xvh^{\star}})\). This implies that \(\Rm - \Em\) has at least \(\rank(\bar{\mathsf{K}}_{\xvh^{\star}})\) positive eigenvalues. Thus, Eq.~\eqref{eq:inertia relation2} is established.
Since \(\bar{\widehat{\xv}}^{\star}\) necessarily exists, \(\bar{\mathsf{K}}_{\xvh^{\star}}\) is full rank, and Eq.~\eqref{eq:inertia relation1} holds, all distortion constraints for the \(\rank(\bar{\mathsf{K}}_{\xvh^{\star}})\) components in the compression are strict, i.e.,
\begin{align} \label{eq:inertia relation3}
\rank(\bar{\mathsf{K}}_{\xvh^{\star}}) \leq N - \rank(\ev - \dsv).
\end{align}

Finally, by exhaustively selecting \(\bar{\xv}\) from \(\xv\), the combination of the relations in Eq.~\eqref{eq:inertia relation2} and Eq.~\eqref{eq:inertia relation3} yields Eq.~\eqref{eq:inertia relation}.
\end{IEEEproof}

In Theorem~\ref{thm:inertia relation}, the first upper bound of \(\rank(\Kxs)\) is given by \(N-\rank(\ev-\dsv)\), which is the number of components for which the distortion constraints are active in the optimal reconstruction.
Milder distortion constraints (larger \(\ev\)) reduce the number of independent components in the optimal reconstruction, while stricter constraints (smaller \(\ev\)) increase independent components.
More importantly, under the individual distortion criteria, the optimal distortion allocation is more intricate than under the sum distortion criterion. For the sum distortion case, the distortion constraint \(\sum_{i=1}^{N} d_{i} \leq \sum_{i=1}^{N} e_{i}\) can be replaced by equality when the rate is non-zero~\cite[Corollary 8.19]{yeung2008information}. In contrast, for the individual case, not all constraints \(d_i \le e_i\) in Eq.~\eqref{eq:maxdet c1} can generally be strict. However, from our first upper bound, \(\rank(\Kxs) \ge 1\) implies \(\rank(\ev - \dsv) \leq N-1\), and at least one distortion constraint is strict, i.e., \(d_i = e_i, \; \exists \, i \in [N]\).

The second upper bound of \(\rank(\Kxs)\) is given by \(n_+(\Rm-\Em)\), which is the number of physically meaningful independent components in the reconstruction \(\xvh\) if we assume \(\Dm=\Em\).
Compared to \(\xvh\), \(\xvhs\) avoids physically unachievable components by replacing them with trivial components (i.e., zero with probability one). To be specific, the optimal reconstruction \(\xvhs\) has at most \(n_+(\Rm-\Em)\) non-trivial independent components.
Therefore, it is easy to deduce that one way to achieve efficient compression is to find a reconstruction with fewer independent components when the SDC does not hold.

\section{Source Compression with 2TC Covariance}
\label{sec:Source Compression with 2TC Covariance}


\subsection{2TC Framework for Covariance Modeling}
\label{subsec:2TC Framework for Covariance Modeling}

To quantify the component-wise correlations in the joint compression of a vector Gaussian source, we consider a covariance matrix based on a hierarchical framework that distinguishes between two types of correlations (components). We refer to it as the 2TC covariance matrix, i.e.,
\begin{align} \label{eq:correlation definition}
   (\Rm)_{ij} =
   \begin{cases}
      1, & \text{if } i = j, \\
      \rho_1, & \text{if } i = 1 \text{ or } j = 1 \text{ and } i \neq j, \\
      \rho_0, & \text{if } i, j > 1 \text{ and } i \neq j,
   \end{cases}
\end{align}
for \(i, j \in [N]\), where \((\Rm)_{ij}\) denotes the \((i, j)\)-th entry of \(\Rm\). Herein, the \textit{central correlation} \(\rho_1\) denotes the correlation between the first central component and the remaining \(N-1\) peripheral components, while the \textit{peripheral correlation} \(\rho_0\) denotes the correlation between the remaining \(N-1\) components. 
To avoid tedious mathematical discussions, we assume all correlations are non-negative, but our results can be extended to the case with negative correlations~\cite{lapidoth2010sending}.
We now present a lemma related to the covariance matrix \(\Rm\) in Eq.~\eqref{eq:correlation definition}, which plays a crucial role in the subsequent analysis.
\addtolength{\topmargin}{0.02in}
\begin{lemma} \label{lemma:determinant}
   Define the matrix \(\Am \triangleq \Rm - \Gammam\), where \(\Gammam = \diag(\gamma_1, \gamma_2, \cdots, \gamma_N)\) with \(0 < \gamma_i \leq 1\). We denote the determinant of \(\Am\) as \(\Delta(\rho_0, \rho_1, \Gammam)\), which is given by 
\begin{align} \label{eq:Delta determinant}
   \Delta(\rho_0, \rho_1, \Gammam) = & \Bigg( \overline{\gamma}_1 + \sum_{i=2}^{N} \frac{\rho_0 \overline{\gamma}_1 - \rho_1^2}{\overline{\gamma}_i - \rho_0} \Bigg)\prod_{i=2}^{N} (\overline{\gamma}_i - \rho_0),
\end{align}
where \(\overline{\gamma}_i = 1-\gamma_i\) for all \(i \in [N]\). 
\end{lemma}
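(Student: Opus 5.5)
Our approach is to split \(\Am\) into a diagonal part plus a low-rank correction and then apply the matrix determinant lemma (equivalently, a Schur complement on the first row and column). Write \(\Am = \Rm - \Gammam\) in block form with the central index \(1\) separated from the peripheral indices \(\{2,\dots,N\}\):
\begin{align*}
\Am = \begin{pmatrix} \overline{\gamma}_1 & \rho_1 \mathbf{1}^{\T} \\ \rho_1 \mathbf{1} & \mathsf{B} \end{pmatrix},
\end{align*}
where \(\mathsf{B} = \mathsf{G} + \rho_0 \mathbf{1}\mathbf{1}^{\T}\) and \(\mathsf{G} = \diag(\overline{\gamma}_2 - \rho_0, \dots, \overline{\gamma}_N - \rho_0)\). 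Here \(\mathbf{1}\) is the all-ones vector of length \(N-1\), and the diagonal entries follow because \((\Rm)_{ii} = 1\) gives \(1-\gamma_i = \overline{\gamma}_i\). We would first assume the generic case \(\overline{\gamma}_i \neq \rho_0\) for all \(i \ge 2\), so that \(\mathsf{G}\) is invertible and the expression in Eq.~\eqref{eq:Delta determinant} is well defined.

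To get a symmetric formula, we avoid eliminating \(\mathsf{B}\) and instead write the whole matrix as a rank-two perturbation of a diagonal matrix. Set \(\mathsf{G}_0 = \diag(g_1, \overline{\gamma}_2-\rho_0,\dots,\overline{\gamma}_N-\rho_0)\) with \(g_1\) to be chosen. Let \(\mathsf{u} = (\rho_1/\sqrt{\rho_0}, \mathbf{1}^{\T})^{\T}\sqrt{\rho_0}\), so that \(\mathsf{u}\mathsf{u}^{\T}\) reproduces the off-diagonal entries \(\rho_1\) and \(\rho_0\), and has first diagonal entry \(\rho_1^2/\rho_0\). Choosing \(g_1 = \overline{\gamma}_1 - \rho_1^2/\rho_0\) gives \(\Am = \mathsf{G}_0 + \mathsf{u}\mathsf{u}^{\T}\), which is a rank-one update. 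The matrix determinant lemma then gives
\begin{align*}
\det(\Am) = \det(\mathsf{G}_0)\bigl(1 + \mathsf{u}^{\T}\mathsf{G}_0^{-1}\mathsf{u}\bigr) = g_1\prod_{i\ge2}(\overline{\gamma}_i-\rho_0)\Bigl(1 + \frac{\rho_1^2}{g_1} + \sum_{i\ge2}\frac{\rho_0}{\overline{\gamma}_i-\rho_0}\Bigr).
\end{align*}
Multiplying \(g_1\) into the bracket gives \(g_1 + \rho_1^2 = \overline{\gamma}_1\). The remaining term is \(g_1\rho_0 = \rho_0\overline{\gamma}_1 - \rho_1^2\), so the bracket becomes \(\overline{\gamma}_1 + \sum_{i\ge2}\frac{\rho_0\overline{\gamma}_1-\rho_1^2}{\overline{\gamma}_i-\rho_0}\), which is exactly Eq.~\eqref{eq:Delta determinant}.

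The main obstacle is the set of degenerate parameter values. This construction needs \(\rho_0 > 0\) and \(g_1 \neq 0\), as well as \(\overline{\gamma}_i \neq \rho_0\). We will handle all of these by continuity. Both \(\det(\Am)\) and the right-hand side of Eq.~\eqref{eq:Delta determinant}, once \(\prod_{i\ge2}(\overline{\gamma}_i-\rho_0)\) is distributed into the sum, are polynomials in \((\rho_0,\rho_1,\gamma_1,\dots,\gamma_N)\). These polynomials agree on an open dense set, so they agree everywhere, and the identity is then read in this cleared-denominator sense at the exceptional points.

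As a check, we would verify the case \(N=2\): the formula gives \(\overline{\gamma}_1(\overline{\gamma}_2-\rho_0)+\rho_0\overline{\gamma}_1-\rho_1^2 = \overline{\gamma}_1\overline{\gamma}_2-\rho_1^2\), which is correct.

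As an alternative route avoiding \(\rho_0>0\), we could take the Schur complement with respect to \(\mathsf{B}\). This uses \(\det\mathsf{B} = \prod(\overline{\gamma}_i-\rho_0)(1+\rho_0 s)\) with \(s = \sum_{i\ge2}(\overline{\gamma}_i-\rho_0)^{-1}\), and by Sherman--Morrison \(\mathbf{1}^{\T}\mathsf{B}^{-1}\mathbf{1} = s/(1+\rho_0 s)\). Then \(\det\Am = \det\mathsf{B}\,(\overline{\gamma}_1 - \rho_1^2 s/(1+\rho_0 s))\), and this simplifies to the same expression.
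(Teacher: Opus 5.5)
Your proof is correct in substance and takes a different, shorter route than the paper. The displayed algebra has slips you should fix. As typed, $\mathsf{u} = (\rho_1/\sqrt{\rho_0}, \mathbf{1}^{\T})^{\T}\sqrt{\rho_0}$ equals $(\rho_1, \sqrt{\rho_0}, \ldots, \sqrt{\rho_0})^{\T}$, whose outer product has $(1,j)$ entries $\rho_1\sqrt{\rho_0}$ rather than $\rho_1$. The vector you need, the one with first diagonal entry $\rho_1^2/\rho_0$ as you state, is $\mathsf{u} = (\rho_1/\sqrt{\rho_0}, \sqrt{\rho_0}\,\mathbf{1}^{\T})^{\T}$. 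For that vector the first term of $\mathsf{u}^{\T}\mathsf{G}_0^{-1}\mathsf{u}$ is $\rho_1^2/(\rho_0 g_1)$, not $\rho_1^2/g_1$. The identity that closes the bracket is then $g_1 + \rho_1^2/\rho_0 = \overline{\gamma}_1$; your $g_1 + \rho_1^2 = \overline{\gamma}_1$ is false for $g_1 = \overline{\gamma}_1 - \rho_1^2/\rho_0$. With these corrections, $g_1\bigl(1 + \mathsf{u}^{\T}\mathsf{G}_0^{-1}\mathsf{u}\bigr) = \overline{\gamma}_1 + (\rho_0\overline{\gamma}_1 - \rho_1^2)\sum_{i\ge 2}(\overline{\gamma}_i-\rho_0)^{-1}$, and the conclusion stands. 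Writing the update instead as $\rho_0\mathsf{w}\mathsf{w}^{\T}$ with $\mathsf{w} = (\rho_1/\rho_0, \mathbf{1}^{\T})^{\T}$ removes the square root and needs only $\rho_0 \neq 0$.

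The paper subtracts only $\rho_0\mathbf{1}\mathbf{1}^{\T}$ from the full $N\times N$ matrix, leaving an arrowhead matrix $\Bm$ with off-diagonal entries $\rho_1-\rho_0$. It then needs the arrowhead determinant and an explicit formula for $\Bm^{-1}$. It must also simplify $\det(\Bm)\bigl(1+\rho_0\mathbf{1}^{\T}\Bm^{-1}\mathbf{1}\bigr)$, in which the terms quadratic in $\sum_{i\ge2}(1-\rho_0-\gamma_i)^{-1}$ have to cancel.

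You instead absorb both correlation levels into one rank-one term by rescaling its first coordinate and compensating in the $(1,1)$ entry. The remainder is diagonal, so one application of the determinant lemma finishes with no inverse to compute. The paper's splitting never divides by $\rho_0$, whereas your trick is specific to two correlation levels: a third level generally cannot be absorbed into a single rank-one term. Your Schur-complement variant, which is also correct, is closer in spirit to the recursive peeling sketched in the remark following the lemma.

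On degenerate parameters you are more careful than the paper. Its proof tacitly needs $\xi \neq 0$ to write $\Bm^{-1}$, which the hypotheses $0<\gamma_i\le 1$ do not guarantee. Your polynomial-identity argument covers every exceptional point.
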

\begin{IEEEproof}
   We first observe that the matrix \(\Am\) can be decomposed as
\(\Am = \Bm + \rho_0 \mathbf{1}\mathbf{1}^{\T}\), where \(\Bm = \begin{pmatrix}
   1 - \rho_0 - \gamma_1 & \vv^{\T} \\
   \vv & \Cm
\end{pmatrix}\),
\(\vv = (\rho_1 - \rho_0, \cdots, \rho_1 - \rho_0)^{\T}\), \(\Cm = \diag(1 - \rho_0 - \gamma_2, \cdots, 1 - \rho_0 - \gamma_N)\), and \(\mathbf{1}\) is the all-ones vector. The matrix \(\Bm\) is an arrowhead matrix, and its determinant is given by \(\det(\Bm) = \xi \prod_{i=2}^{N} (1 - \rho_0 - \gamma_i)\), 
where \(\xi = 1 - \rho_0 - \gamma_1 - \sum_{i=2}^{N} \frac{(\rho_1 - \rho_0)^2}{1 - \rho_0 - \gamma_i}\). The inverse of \(\Bm\) can be expressed as
\(\Bm^{-1} =
\begin{pmatrix}
      0 & \\
      & \Cm^{-1}
\end{pmatrix} + \frac{1}{\xi} \uv \uv^{\T}\), 
where \(\uv = \big(-1, \frac{\rho_1 - \rho_0}{1 - \rho_0 - \gamma_2}, \cdots, \frac{\rho_1 - \rho_0}{1 - \rho_0 - \gamma_N}\big)^{\T}\). Applying the matrix determinant lemma~\cite{harville1998matrix}, we have
\(\Delta(\rho_{0}, \rho_{1}, \Gammam) = \det(\Bm) \cdot \Big(1 + \rho_0 \sum_{i,j} \left(\Bm^{-1}\right)_{i,j}\Big)\). 
By substituting \(\det(\Bm)\) and \(\Bm^{-1}\) into \(\Delta(\rho_{0}, \rho_{1}, \Gammam)\) and then simplifying, we obtain Eq.~\eqref{eq:Delta determinant}.
\end{IEEEproof}

Using Lemma~\ref{lemma:determinant}, the eigenvalues of \(\Rm\) in Eq.~\eqref{eq:correlation definition} are given by \(\lambda_1 = 1 - \rho_0\) with multiplicity \(N-2\), and \(\lambda_{2,3} = \frac{1}{2}(N-2)\rho_{0}\pm\frac{1}{2}\sqrt{(N-2)^{2}\rho_{0}^{2}+4(N-1)\rho_{1}^{2}}\).
Thus, \(\Rm \succ \zerov\) if and only if \((N-2) \rho_0 + 1 - (N-1)\rho_1^2 > 0\) holds.
\begin{remark} \label{remark:2TC extension}

The 2TC framework generalizes previous covariance models~\cite{wagner2008rate,xu2015lossy}, which assumed isotropic correlation \(\rho_0 = \rho_1\). While a fully general covariance matrix would complicate the analysis, the 2TC structure in Eq.~\eqref{eq:correlation definition} yields rather concise results and insights. Importantly, its hierarchical structure supports recursive decomposition and the iterative extension of simpler models. Starting from the two-type correlation case and a known formula for \(\det(\Rm_{2}-\Gammam)\), we can reach the three-type correlation case and derive \(\det(\Rm_{3}-\Gammam)\) by splitting \(N-1\) peripherals into a new central component and the remaining \(N-2\) peripherals, continuing recursively. Ultimately, each component is assigned to a specific correlation level, with higher-level components sharing uniform correlation with all lower-level components.
\hfill\ensuremath{\lozenge}
\end{remark}


\subsection{Characterization of SDC Region}
\label{subsec:Characterization of SDC Region}

The following theorem provides the SDC's equivalent form with \(\Rm\) in Eq.~\eqref{eq:correlation definition} and arbitrary distortion constraints. It also gives a closed-form RDF, in which both the source correlations and the distortion constraints are explicitly incorporated.

\begin{theorem} \label{thm:RE_psd_conditions}
   When \(\Rm\) is the 2TC covariance matrix, the SDC \(\Rm - \Em \succeq \zerov\) holds if and only if
\begin{align}
   e_3 &\leq 1 - \rho_0, \label{eq:e3_condition} \\
   e_2 &\leq 1-\frac{\chi_{3}\rho_{0}^{2}}{1+\chi_{3}\rho_{0}}, \label{eq:e2_condition} \\
   e_1 &\leq 1-\frac{\chi_{2}\rho_{1}^{2}}{1+\chi_{2}\rho_{0}} \label{eq:e1_condition}
\end{align}
hold, where \(\chi_{j}=\sum_{i=j}^{N} \frac{1}{1 - \rho_0 - e_i} \) and we assume \(e_{2} \geq \cdots \geq e_{N}\). The corresponding rate-distortion function is given by
\begin{align}
   \!\!\!\! \Rx([N], \ev) &= \sum_{i=2}^{N} \frac{1}{2} \log \frac{1-\rho_{0}}{e_i} + \frac{1}{2} \log \frac{1 + (N-1)\frac{\rho_0 - \rho_1^2}{1 - \rho_0}}{e_1}. \label{eq:rd sdc1}
\end{align}
When the correlation is isotropic, i.e., \(\rho=\rho_{0}=\rho_{1}\), we have
\begin{align} \label{eq:rd sdc iso}
   \Rx([N], \ev) = \sum_{i=1}^{N} \frac{1}{2} \log \frac{1-\rho}{e_i} + \frac{1}{2} \log \frac{1 + (N-1)\rho}{1-\rho}.
\end{align}
The average rate per component \(\overline{\mathbb{R}}_{\xv}([N], \ev)=\frac{\Rx([N], \ev)}{N}\) is
\begin{align}  \label{eq:alliso rate per asym} 
\overline{\mathbb{R}}_{\xv}([N], \ev) =  
\frac{1}{2} \log \frac{1-\rho}{e_{g}} + \Theta\left(\frac{\log N}{N}\right),
\end{align}
where the geometric mean distortion \(e_{g} = \big(\prod_{i=1}^N e_i\big)^{1/N}\).
\end{theorem}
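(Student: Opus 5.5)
The plan is to write \(\Am = \Rm - \Em\) and reduce its positive semidefiniteness to nested Schur complements. The ordering of the peripheral distortions makes each step scalar. Set \(\delta_i = 1-\rho_0-e_i\) for \(i \ge 2\). The peripheral block \(\Am[\{2,\dots,N\}]\) equals \(\diag(\delta_2,\dots,\delta_N) + \rho_0 \mathbf{1}\mathbf{1}^{\T}\). Since \(e_2 \ge \cdots \ge e_N\), we have \(\delta_2 \le \delta_3 \le \cdots \le \delta_N\), so at most one of them, namely \(\delta_2\), can be the first to become negative. I will peel off index \(1\) and then index \(2\) using the Schur complement characterization. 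For symmetric \(\begin{pmatrix} a & \mathsf{b}^{\T} \\ \mathsf{b} & \Cm\end{pmatrix}\) with \(\Cm \succ \zerov\), positive semidefiniteness is equivalent to \(a - \mathsf{b}^{\T}\Cm^{-1}\mathsf{b} \ge 0\).

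\textbf{Step 1 (condition \eqref{eq:e3_condition}).} For necessity, suppose \(e_3 > 1-\rho_0\). Then \(e_2 \ge e_3\) gives \(1-e_2 < \rho_0\) and \(1-e_3 < \rho_0\). The \(2\times 2\) principal submatrix on indices \(\{2,3\}\) therefore has determinant \((1-e_2)(1-e_3) - \rho_0^2 < 0\), so \(\Am \not\succeq \zerov\). For the converse direction, under \eqref{eq:e3_condition} we have \(\delta_i \ge 0\) for \(i \ge 3\). Hence \(\Am[\{3,\dots,N\}] = \diag(\delta_3,\dots,\delta_N) + \rho_0\mathbf{1}\mathbf{1}^{\T} \succeq \zerov\), and it is \(\succ \zerov\) when all \(\delta_i > 0\).

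\textbf{Step 2 (condition \eqref{eq:e2_condition}).} Take the Schur complement of index \(2\) with respect to the block \(\{3,\dots,N\}\). By Sherman--Morrison, \(\mathbf{1}^{\T}(\Dm_3 + \rho_0\mathbf{1}\mathbf{1}^{\T})^{-1}\mathbf{1} = \chi_3/(1+\rho_0\chi_3)\), where \(\Dm_3 = \diag(\delta_3,\dots,\delta_N)\). The complement is then \(1-e_2 - \rho_0^2\chi_3/(1+\rho_0\chi_3)\), and its nonnegativity is exactly \eqref{eq:e2_condition}. Hence the peripheral block is \(\succeq \zerov\) if and only if \eqref{eq:e3_condition} and \eqref{eq:e2_condition} hold.

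\textbf{Step 3 (condition \eqref{eq:e1_condition}).} Take the Schur complement of index \(1\) with respect to the block \(\{2,\dots,N\}\), whose off-diagonal coupling vector is \(\rho_1\mathbf{1}\). The same Sherman--Morrison identity gives the complement \(1-e_1-\rho_1^2\chi_2/(1+\rho_0\chi_2)\), so its nonnegativity is \eqref{eq:e1_condition}. Equivalently, this is the sign of \(\Delta(\rho_0,\rho_1,\Em)\) from Lemma~\ref{lemma:determinant} divided by \(\prod_{i\ge 2}\delta_i\), which serves as a cross-check.

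\textbf{Main obstacle: boundary cases.} The hard part is the degenerate cases where some \(\delta_i = 0\) or \(1+\rho_0\chi_j = 0\), so that the inverses above do not exist. In particular \(\delta_2\) may be negative while the block is still PSD, because \(\rho_0\mathbf{1}\mathbf{1}^{\T}\) compensates. I would handle this by perturbation. Replace \(\ev\) by \(\ev - \epsilon\mathbf{1}\), which makes every inequality strict and every relevant block definite. Then let \(\epsilon \downarrow 0\), using closedness of the PSD cone and continuity of both sides of \eqref{eq:e2_condition} and \eqref{eq:e1_condition}. The alternative for the singular cases is generalized Schur complements with pseudo-inverses.

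\textbf{Rate expressions.} Under the SDC, the Hadamard lower bound is attained with \(\Dsm = \Em\), as recalled around Eq.~\eqref{eq:SDC} and Theorem~\ref{thm:det(Rm-Dsm)}. So \(\Rx([N],\ev) = \frac12\log\frac{\det(\Rm)}{\det(\Em)}\). Lemma~\ref{lemma:determinant} with \(\Gammam \to \zerov\), i.e. \(\overline{\gamma}_i = 1\), gives
\begin{align*}
\det(\Rm) = (1-\rho_0)^{N-1}\Big(1+(N-1)\tfrac{\rho_0-\rho_1^2}{1-\rho_0}\Big).
\end{align*}
Splitting the logarithm term by term yields \eqref{eq:rd sdc1}. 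For the isotropic case, setting \(\rho_0=\rho_1=\rho\) gives \(1+(N-1)\rho\) for the bracket. Rewriting \((1-\rho)^{N-1} = (1-\rho)^N/(1-\rho)\) then gives \eqref{eq:rd sdc iso}. Dividing by \(N\) produces \(\frac12\log\frac{1-\rho}{e_g} + \frac{1}{2N}\log\frac{1+(N-1)\rho}{1-\rho}\). For fixed \(\rho \in (0,1)\), the second term is \(\Theta(\log N/N)\), which proves \eqref{eq:alliso rate per asym}.
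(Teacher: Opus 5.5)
Your proof is correct, but your PSD characterization takes a different route from the paper's.

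The paper applies Sylvester's criterion to every principal minor of \(\Rm-\Em\). It evaluates each minor with Lemma~\ref{lemma:determinant}, treating index sets without and with index \(1\) separately. It uses the \(\{2,3\}\) minor, as you do, to show that only \(e_2\) can exceed \(1-\rho_0\). It then asserts that the full index sets are the binding ones.

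You instead set \(\mathsf{C}=\Am[\{3,\dots,N\}]\) and \(\mathsf{P}=\Am[\{2,\dots,N\}]\), and peel off index \(1\) and then index \(2\) by Schur complements. The key identity is \(\mathbf{1}^{\T}\mathsf{C}^{-1}\mathbf{1}=\chi_3/(1+\rho_0\chi_3)\), from Sherman--Morrison. This gives a genuine equivalence with three scalar checks instead of \(2^N-1\) minors. The full-set quantities \(\chi_3\) and \(\chi_2\) then come out automatically rather than by assertion. In fact, when \(e_2>1-\rho_0\), your Step~2 inequality is exactly the full-set instance of \eqref{eq:RE_psd_uniqueness}. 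Your perturbation step also covers the singular boundary cases, which the paper does not address.

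The paper's route, in exchange, reuses Lemma~\ref{lemma:determinant} directly. It also produces the form \(f(\rho_0)\ge g(\rho_0)\) that Theorem~\ref{thm:max rho0} builds on. Your rate expressions are derived exactly as in the paper.

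Two small remarks.

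First, in the sufficiency half of the perturbation, the strict inequalities at \(\ev-\epsilon\mathbf{1}\) come from monotonicity, not continuity. Lowering \(\ev\) raises \(\mathsf{C}\) and \(\mathsf{P}\) in the Loewner order. Hence \(\mathbf{1}^{\T}\mathsf{C}^{-1}\mathbf{1}\) and \(\mathbf{1}^{\T}\mathsf{P}^{-1}\mathbf{1}\) decrease, and both right-hand sides increase.

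Second, your cross-check is slightly off. The Schur complement of index \(1\) equals \(\Delta(\rho_0,\rho_1,\Em)/\det\mathsf{P}\), where \(\det\mathsf{P}=(1+\rho_0\chi_2)\prod_{i\ge 2}\delta_i\). Moreover, \(1+\rho_0\chi_2<0\) whenever \(\delta_2<0\) and \(\mathsf{P}\succ\zerov\). In that case the complement's sign is opposite to that of \(\Delta/\prod_{i\ge 2}\delta_i\). Nothing in your proof depends on this.
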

\begin{IEEEproof}
   Sylvester's criterion states that the equivalent condition for \(\Rm - \Em \succeq \zerov\) is that all principal minors of \(\Rm - \Em\) are non-negative~\cite{horn2012matrix}.
Define two index sets: \(\mathcal{I}_{\backslash \{1\}} \subset [N]\) for principal minors excluding the first element and \(\mathcal{I}_{\cup \{1\}} \subseteq [N]\) for those including it. We analyze these two cases separately.

First, consider the case excluding the first component. By Lemma~\ref{lemma:determinant}, the determinant of \((\Rm - \Em)[\mathcal{I}_{\backslash \{1\}}]\) is given by
\begin{align}
\Delta(\rho_{0},\ev,\mathcal{I}_{\backslash \{1\}}) ={}& \Big( 1 + \!\!\! \sum_{i \in \mathcal{I}_{\backslash \{1\}}} \!\!\!\frac{\rho_0}{1 - \rho_0 - e_i} \Big) \!\!\! \prod_{i \in \mathcal{I}_{\backslash \{1\}}} \!\!\! (1 - \rho_0 - e_i).
\end{align}
If \(e_2 + \rho_{0} \leq 1\), then it follows that \(e_i + \rho_{0} \leq 1\) for all \(i \neq 1\), and \((\Rm - \Em)[\mathcal{I}_{\backslash \{1\}}]\) is positive semidefinite for any \(\mathcal{I}_{\backslash \{1\}}\). If \(\exists \, j \in [N]\) such that \(e_j + \rho_{0} > 1\), the condition
\begin{align} \label{eq:RE_psd_uniqueness}
\frac{\rho_{0}}{\rho_{0} + e_j - 1} \geq 1 + \sum_{i \in \mathcal{I}_{\backslash \{1,j\}}} \frac{\rho_{0}}{1 - \rho_{0} - e_i}
\end{align}
should hold. To prove the uniqueness of \(j\), suppose there exists another \(k \neq j\) such that both \(j\) and \(k\) satisfy the condition in Eq.~\eqref{eq:RE_psd_uniqueness}. The \(2 \times 2\) principal minor corresponding to the subset \(\{j, k\}\) is given by \(\det((\Rm - \Em)[\{j, k\}]) = (1 - e_j)(1 - e_k) - \rho_0^2\). 
Since \(e_j + \rho_{0} > 1\) and \(e_k + \rho_{0} > 1\), we have \(\det((\Rm - \Em)[\{j, k\}]) < 0\). This contradicts the positive semidefiniteness of \(\Rm - \Em\). Therefore, only \(e_2 > 1-\rho_{0}\) satisfies the condition in Eq.~\eqref{eq:RE_psd_uniqueness}, as \(j\) is unique and \(\{e_i\}_{i=2}^{N}\) are in non-increasing order.
For any \(\mathcal{I}_{\backslash \{1,j\}} \subset [N]\), this condition holds, and rewriting it leads directly to Eq.~\eqref{eq:e3_condition} and Eq.~\eqref{eq:e2_condition}.

Next, consider the case including the first component. The matrix \((\Rm - \Em)[\mathcal{I}_{\cup \{1\}}]\) is positive semidefinite if and only if
\begin{align} \label{eq:RE_psd_exclude}
\Big( 1 - e_1 + \sum_{i \in \mathcal{I}_{\backslash \{1\}}} \frac{\rho_0 - \rho_1^2 - \rho_0 e_1}{1 - \rho_0 - e_i} \Big) (1 - \rho_0 - e_2) \geq 0.
\end{align}
holds based on Eq.~\eqref{eq:e3_condition}. By replacing the set \(\mathcal{I}_{\backslash \{1\}}\) with \([N]\) and performing simplification, Eq.~\eqref{eq:RE_psd_exclude} reduces to Eq.~\eqref{eq:e1_condition}.

Finally, we obtain the RDF in Eq.~\eqref{eq:rd sdc1} using the Hadamard rate in Eq.~\eqref{eq:hadamard bound value} and Lemma~\ref{lemma:determinant} to evaluate \(\det(\Rm)\).
\end{IEEEproof}
In contrast to the classical rate-distortion trade-off~\cite{thomas2006elements}, where the impact of distortion on the rate was revealed, we further characterize the impacts of both distortion constraints and source correlations on the compression rate.
Specifically, by fully leveraging source correlations, Eq.~\eqref{eq:rd sdc1} reveals the theoretical limits for improving compression efficiency. Stronger correlations \(\rho_0\) and \(\rho_1\) between components result in a lower compression rate while maintaining the same distortion requirements.
Moreover, the complicated interactions between distortion constraints \(\{e_i\}_{i=1}^{N}\) and correlations \(\rho_0\) and \(\rho_1\) are captured in Eqs.~\eqref{eq:e3_condition}-\eqref{eq:e1_condition}, which define the SDC region. It can be analyzed from two perspectives: distortion constraints and source correlations.

\subsubsection{A Distortion-Constraint Perspective}
For a given vector source \(\xv \sim \mathcal{N}(\mathsf{0},\Rm)\) with \(\Rm\) in Eq.~\eqref{eq:correlation definition}, we quantify the probability of event \(\mathcal{A}_{0} = \{\mathbf{E} \preceq \Rm\}\) under uniformly distributed distortion constraints \(e_i \sim \mathcal{U}[0,1]\) for all \(i \in [N]\).
\begin{theorem} \label{thm:sdc exponential asymp}
   \(P(\mathcal{A}_0)\) takes the form of an \(N\)-fold integral in Eq.~\eqref{eq:SDC prob},
\begin{figure*}[t]
   \begin{align}
      P(\mathcal{A}_0) ={}& (N-1)\underbrace{\int_{0}^{1-\rho_{0}}  \cdots \int_{0}^{1-\rho_{0}}}_{N-2 \text{ times}}  \int_{\max\{e_{3},\cdots,e_{N}\}}^{1-\rho_{0}-\frac{\rho_{1}^{2}-\rho_{0}}{1-\left(\rho_{1}^{2}-\rho_{0}\right)\chi_{3}}}  \int_{0}^{1-\frac{\chi_{2}\rho_{1}^{2}}{1+\chi_{2}\rho_{0}}} \, \dd e_{1} \dd e_{2} \dd e_{3} \cdots \dd e_{N}, \label{eq:SDC prob}
   \end{align}
   \hrulefill
   \vspace*{-13pt}
\end{figure*}
and its asymptotics with \(N \to \infty\) is given by 
\begin{align} \label{eq:sdc exponential asymp}
   P(\mathcal{A}_0) = e^{N\log(1-\rho_0) + O(\log N)}.
\end{align} 
\end{theorem}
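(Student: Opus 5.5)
The plan has two parts: first express $P(\mathcal{A}_0)$ as the iterated integral in Eq.~\eqref{eq:SDC prob} using Theorem~\ref{thm:RE_psd_conditions}, then bound that integral from above and below to get the exponential rate.

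\textbf{Step 1: the integral representation.} Theorem~\ref{thm:RE_psd_conditions} states that, once $e_2\ge\cdots\ge e_N$ are ordered, the SDC is equivalent to Eqs.~\eqref{eq:e3_condition}--\eqref{eq:e1_condition}.
\begin{itemize}
\item The constraints on $e_1$ and on the peripheral components are nested. Condition~\eqref{eq:e1_condition} only caps $e_1$, given $e_2,\dots,e_N$, at $1-\chi_2\rho_1^2/(1+\chi_2\rho_0)$. This gives the innermost integral.
\item The SDC is symmetric under permuting the peripheral indices. Without ordering, it forces all peripheral $e_i$ except possibly one (the largest) to lie in $[0,1-\rho_0]$. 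This is the uniqueness argument in the proof of Theorem~\ref{thm:RE_psd_conditions}.
\item Choosing which peripheral index is the largest gives the factor $N-1$. The remaining $N-2$ variables range over $[0,1-\rho_0]$, and the largest ranges from $\max\{e_3,\dots,e_N\}$ up to the bound from \eqref{eq:e2_condition}.
\item That upper bound must also keep the inner $e_1$-range nonempty. Solving $1+\chi_2\rho_0\ge \chi_2\rho_1^2$ (with $e_1\ge0$) for $e_2$, using $\chi_2=\frac{1}{1-\rho_0-e_2}+\chi_3$, gives the stated limit $1-\rho_0-\frac{\rho_1^2-\rho_0}{1-(\rho_1^2-\rho_0)\chi_3}$. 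I would check carefully which of this and \eqref{eq:e2_condition} is binding, and write the minimum if necessary.
\end{itemize}
This is bookkeeping once the conditions are in hand.

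\textbf{Step 2: upper bound.} The integrand is at most $1$ in $e_1$. The $e_2$-range has length at most $1$, or at most about $\rho_0$ beyond $1-\rho_0$. Hence
\[
P(\mathcal{A}_0)\le (N-1)(1-\rho_0)^{N-2}.
\]
Taking logarithms, this is $N\log(1-\rho_0)+O(\log N)$.

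\textbf{Step 3: lower bound.} Here I restrict to a subregion where every integrand length is bounded below by a constant, or at worst by $1/\mathrm{poly}(N)$.
\begin{itemize}
\item Restrict all peripheral $e_i$ to lie in $[0,(1-\rho_0)(1-1/N)]$, so that each $1/(1-\rho_0-e_i)\le N/(1-\rho_0)$. Even so, $\chi_2$ can be of order $N^2$ in the worst case.
\item A better choice is to note that the typical $\chi_2$ is only $O(N\log N)$, and in any case $\chi_2\ge (N-1)/(1-\rho_0)$. The bound in \eqref{eq:e1_condition} tends to $1-\rho_1^2/\rho_0$ as $\chi_2\to\infty$.
\item So if $\rho_1^2<\rho_0$, the $e_1$-length is at least the constant $\min\{1,1-\rho_1^2/\rho_0\}>0$, since the bound is monotone in $\chi_2$.
\item The $e_2$-range always contains $[\max_{i\ge3}e_i,\,1-\rho_0]$. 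Its expected length under the uniform product measure is $\Theta((1-\rho_0)/N)$.
\end{itemize}
Combining these factors gives $P(\mathcal{A}_0)\ge c\,(1-\rho_0)^{N-1}/N$, which matches the upper bound's exponent.

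\textbf{Main obstacle.} The hard case is the lower bound when $\rho_1^2\ge\rho_0$. There the $e_1$-range shrinks as $\chi_2$ grows, and positivity of $\Rm$ (that is, $(N-2)\rho_0+1>(N-1)\rho_1^2$) only barely allows it. I would handle this by confining the $e_i$ to $[0,1-\rho_0-\delta]$, so that $\chi_2\le N/\delta$. This costs a factor $(1-\delta/(1-\rho_0))^{N}$, and choosing $\delta=\Theta(1/N)$ keeps that cost polynomial. The cap in \eqref{eq:e1_condition} then behaves like $1-\frac{\chi_2\rho_1^2}{1+\chi_2\rho_0}$. This is positive only if $\chi_2(\rho_1^2-\rho_0)<1$, and for small $\chi_2$ it is roughly $1/(N\delta)$ in order, which needs matching against $\det\Rm>0$. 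I expect this polynomial-loss bookkeeping to be the delicate step. If it fails for general parameters, I would restrict the theorem to the regime $\rho_1^2\le\rho_0$, where Step~3 goes through cleanly.
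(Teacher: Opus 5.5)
Your Steps 1 and 2 are the paper's proof. The integral is read off from \eqref{eq:e3_condition}--\eqref{eq:e1_condition}, with the factor $N-1$ for choosing the largest peripheral constraint. Bounding the inner $(e_1,e_2)$ integral by $1$ then gives $P(\mathcal{A}_0)\le (N-1)(1-\rho_0)^{N-2}$. The paper stops there and explicitly claims only an asymptotic upper bound. Your Step 3 is therefore a genuine addition, and it is what a two-sided reading of the $O(\log N)$ actually requires.

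For $\rho_1^2<\rho_0$ your lower bound is correct, and it can be stated more cleanly. Work on the event $\{e_i\le 1-\rho_0,\ i\ge 2\}$, which has probability $(1-\rho_0)^{N-1}$. On it, \eqref{eq:e3_condition} and \eqref{eq:e2_condition} hold automatically and $\chi_2>0$. The cap in \eqref{eq:e1_condition} decreases in $\chi_2$ towards $1-\rho_1^2/\rho_0$. Hence $P(\mathcal{A}_0)\ge (1-\rho_0)^{N-1}(1-\rho_1^2/\rho_0)$. Your extra factor $1/N$ comes from dropping the multiplier $N-1$, which is harmless.

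Two things need fixing.

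\textbf{The sign in Step 1.} For $\rho_1^2<\rho_0$ the $e_2$-cap lies above $1-\rho_0$. For $e_2>1-\rho_0$, condition \eqref{eq:e2_condition} is equivalent to $1+\rho_0\chi_2\le 0$. Nonemptiness of the $e_1$-range therefore reads $1+\chi_2\rho_0\le\chi_2\rho_1^2$, not $\ge$. Solved as you wrote it, the inequality yields the stated expression as a \emph{lower} bound on $e_2$ in that regime. The comparison you defer is the one computation in the paper's proof. The cap from \eqref{eq:e1_condition} minus the cap from \eqref{eq:e2_condition} equals $-\rho_1^2/[(1-(\rho_1^2-\rho_0)\chi_3)(1+\chi_3\rho_0)]\le 0$, so no minimum is needed.

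\textbf{The ``main obstacle''.} This is not a technical difficulty; the claim is false there. For fixed $\rho_1^2>\rho_0$, $\Rm\succ\zerov$ requires $(N-1)(\rho_1^2-\rho_0)<1-\rho_0$, which fails for large $N$. Since $\Em\succ\zerov$, the SDC is then impossible and $P(\mathcal{A}_0)=0$. This is consistent with your own computation. In this regime all peripherals lie below $1-\rho_0$, so $\chi_2\ge (N-1)/(1-\rho_0)$. The $e_1$-range can then be nonempty only if $(N-1)(\rho_1^2-\rho_0)<1-\rho_0$, which is exactly $\det\Rm>0$. No $\delta$-truncation can help, and restricting to $\rho_1^2\le\rho_0$ is necessary rather than a fallback.

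The boundary case $\rho_1^2=\rho_0$ does work with your truncation. There the $e_1$-length is $1/(1+\rho_0\chi_2)$. Confining the peripherals to $[0,(1-\rho_0)(1-1/N)]$ gives $\chi_2\le N^2/(1-\rho_0)$, at a cost of $(1-1/N)^{N-1}\ge e^{-1}$. Hence $P(\mathcal{A}_0)\ge c\,(1-\rho_0)^{N-1}N^{-2}$.
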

\begin{IEEEproof}
   The \(N\)-fold integral in Eq.~\eqref{eq:SDC prob} is obtained from Eqs.~\eqref{eq:e3_condition}-\eqref{eq:e1_condition} in Theorem~\ref{thm:RE_psd_conditions}, where \(N-1\) arises from the ordering of the constraints \(\{e_{i}\}_{i=2}^{N}\).
   Moreover, by comparing the upper bounds of \(e_2\) in Eq.~\eqref{eq:e1_condition} with Eq.~\eqref{eq:e2_condition}, we have
\begin{align}  
   &\left(1-\rho_{0}-\frac{\rho_{1}^{2}-\rho_{0}}{1-\left(\rho_{1}^{2}-\rho_{0}\right)\chi_{3}}\right)-\left(1-\frac{\chi_{3}\rho_{0}^{2}}{1+\chi_{3}\rho_{0}}\right) \notag \\  
   ={}& \frac{-\rho_{1}^{2}}{\left(1-\left(\rho_{1}^{2}-\rho_{0}\right)\chi_{3}\right)\left(1+\chi_{3}\rho_{0}\right)} \leq 0.
\end{align}  
Thus, the upper bound of \(e_2\) in Eq.~\eqref{eq:SDC prob} is improved to prevent the divergence of the integral.   
   To analyze the asymptotic behavior of this \(N\)-fold integral, we consider the inner double integral over \(e_1\) and \(e_2\). We observe that
\begin{align}
   \int_{\max\{e_{3},\cdots,e_{N}\}}^{1-\rho_{0}-\frac{\rho_{1}^{2}-\rho_{0}}{1-\left(\rho_{1}^{2}-\rho_{0}\right)\chi_{3}}}  \int_{0}^{1-\frac{\chi_{2}\rho_{1}^{2}}{1+\chi_{2}\rho_{0}}} \, \dd e_{1} \dd e_{2} < 1.
\end{align} 
This is because the integration domain of this double integral, \(\mathcal{D}_2\), is a subset of \([0,1]^2\). Thus, we obtain an asymptotic upper bound for \(P(\mathcal{A}_0)\) as given in Eq.~\eqref{eq:sdc exponential asymp}.
\end{IEEEproof}

In Theorem~\ref{thm:sdc exponential asymp}, we provide the exact \(N\)-fold integral form of \(P(\mathcal{A}_0)\) in Eq.~\eqref{eq:SDC prob}.
We then analyze its asymptotic behavior with respect to \(N\). 
When there is no prior information on each source component, it follows that the probability of satisfying the SDC decays exponentially with the source length at a rate of \(-\log(1-\rho_0)\) from Eq.~\eqref{eq:sdc exponential asymp}.
This highlights the limitations of previous studies, which focused solely on the case where the SDC holds. The theoretical results and valuable insights when the SDC is not met in this work are of great importance.

\subsubsection{A Source-Correlation Perspective}
For given distortion constraints, we determine the quantitative component-wise correlations that a vector source should have to achieve the compression rate in Eq.~\eqref{eq:hadamard bound value}.
The correlation region that satisfies the SDC is
\(\mathcal{C}_{0} = \left\{ (\rho_0, \rho_1) \mid 0 \leq \rho_0 \leq \rho_0^m, \, 0 \leq \rho_1 \leq \rho_1^m \right\}\), 
where \(\rho_1^m = \big(\frac{1}{\chi_{2}}+\rho_{0}\big)^{\frac{1}{2}}\!\!\left(1-e_{1}\right)^{\frac{1}{2}}\) due to Eq.~\eqref{eq:e1_condition}
and satisfies \(\rho_1^m = 0\) if \(\rho_0 = \rho_0^m\).
Since polynomials of degree five or higher generally lack analytical solutions, explicitly solving for \(\rho_0^m\) in Eq.~\eqref{eq:e2_condition} (an \((N-1)\)-th degree polynomial) becomes intractable. If the largest peripheral distortion constraint \(e_2\) is not unique, Eq.~\eqref{eq:e3_condition} directly implies \(\rho_0^m = 1 - e_2\).
Otherwise, we provide bounds and approximation for \(\rho_0^m\).

\begin{figure*}[htbp]
\centering
\begin{minipage}[t]{0.32\textwidth}
   \includegraphics[height=4.2cm]{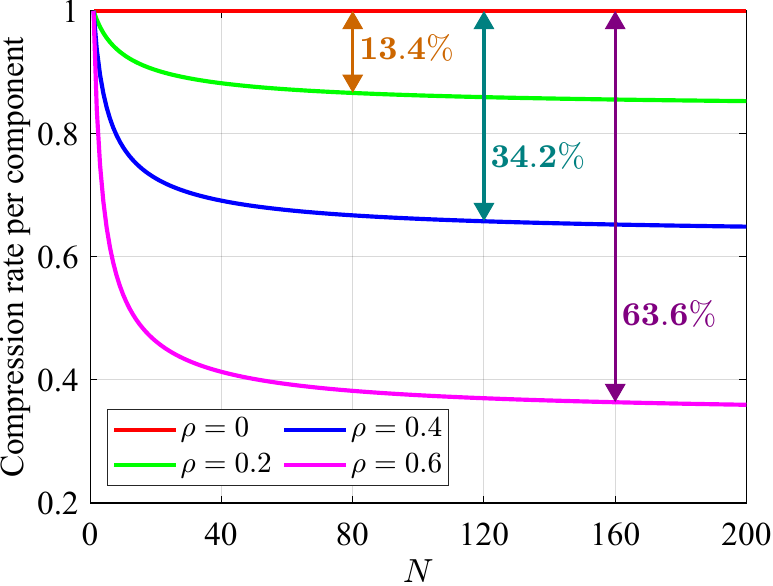}
   \caption{Compression rate per component in Eqs.~\eqref{eq:rd sdc iso} and \eqref{eq:alliso rate per asym} versus source length under identical distortion constraint \(e=0.25\) in Theorems~\ref{thm:RE_psd_conditions}.}
   \label{fig:iso_uniform_2D_N}
\end{minipage}
\hfill
\begin{minipage}[t]{0.32\textwidth}
      \includegraphics[height=4.2cm]{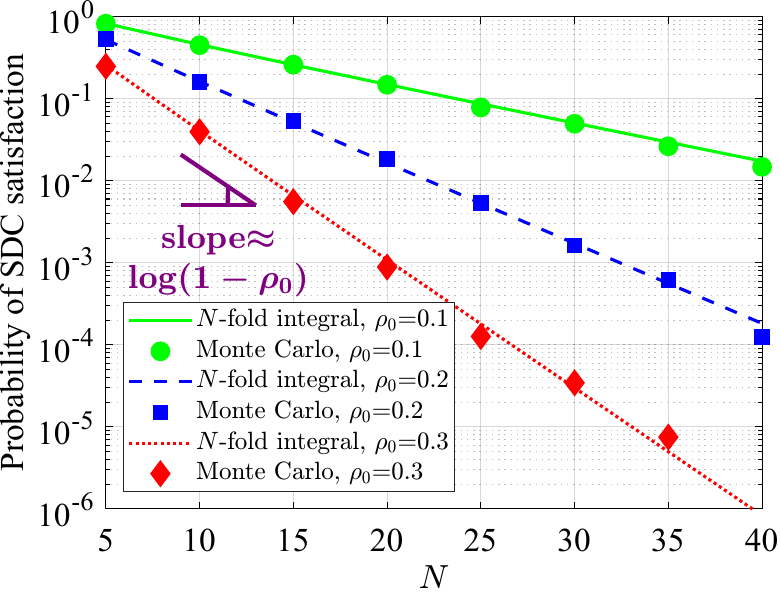}
      \caption{The probability of SDC satisfaction versus the source length in Theorem~\ref{thm:sdc exponential asymp} for \(\rho_1=0.45\) and all distortion constraints \(e_i \sim \mathcal{U}[0,1]\).}
      \label{fig:SDC_prob_N_2}
\end{minipage}
\hfill
\begin{minipage}[t]{0.32\textwidth}
      \includegraphics[height=4.2cm]{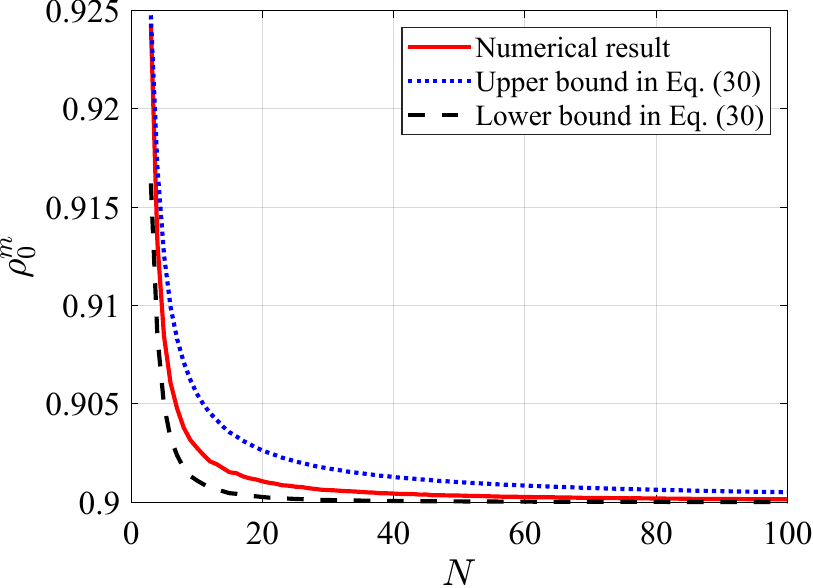}
      \caption{Evaluation of \(\rho_0^m\) versus source length in Theorem~\ref{thm:max rho0} with fixed \(e_2 = 0.1\).}
      \label{fig:rho0m_e2_fixed}
\end{minipage}
\vspace{-1.5em}
\end{figure*}

\begin{theorem} \label{thm:max rho0}
   For a distortion constraint vector \(\ev\), the maximum peripheral correlation \(\rho_0^m\) satisfying the SDC is bounded as
\begin{align} \label{eq:max rho0 bounds}
   1 - e_2 + \frac{c_l}{N} \leq \rho_0^m \leq 1 - e_2 + \frac{c_u}{N-1},
\end{align}
   where \(c_l = 2(e_{2}-e_{3})\Big(\sqrt{\frac{1-e_{3}}{1-e_{2}}}+1\Big)^{-1}\)
   and \(c_u = e_2 - \overline{e}_3\),
   with \(\overline{e}_3 = \frac{1}{N-2}\sum_{i=3}^{N} e_i\).
   For large \(N\), the quantity \(\rho_0^m\) satisfies
   \begin{align} \label{eq:rho0m theta}
      \rho_0^m = 1 - e_2 + \Theta\left(\frac{1}{N}\right).
   \end{align}
\end{theorem}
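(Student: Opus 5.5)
We would work directly with the single scalar inequality that governs $\rho_0^m$. By Theorem~\ref{thm:RE_psd_conditions}, for fixed $\ev$ the admissible $\rho_0$ are those satisfying Eq.~\eqref{eq:e3_condition} and Eq.~\eqref{eq:e2_condition}; the constraint on $e_1$ only determines $\rho_1^m$. If $e_2=e_3$, we already have $\rho_0^m=1-e_2$, so assume $e_2>e_3$. Set $\delta\triangleq\rho_0-(1-e_2)$, so that $1-\rho_0-e_i=(e_2-e_i)-\delta$. In the relevant range $0<\delta<e_2-e_3$, multiply Eq.~\eqref{eq:e2_condition} through by $1+\chi_3\rho_0>0$ and rearrange, exactly as in the uniqueness condition Eq.~\eqref{eq:RE_psd_uniqueness}. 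We expect this to give
\begin{align}
\frac{1}{\delta}\;\ge\;\frac{1}{\rho_0}+\chi_3,\qquad \chi_3=\sum_{i=3}^{N}\frac{1}{e_2-e_i-\delta}. \label{eq:plan delta form}
\end{align}
For $\delta\le 0$ the inequality holds trivially, since then $1-\rho_0-e_2\ge 0$ and the right-hand side of Eq.~\eqref{eq:e2_condition} is at least $1-\rho_0\ge e_2$. Hence $\rho_0^m=1-e_2+\delta^m$, where $\delta^m$ is the largest $\delta\in(0,e_2-e_3)$ satisfying \eqref{eq:plan delta form}. It therefore suffices to show that (a) every feasible $\delta$ satisfies $\delta\le c_u/(N-1)$, and (b) $\delta=c_l/N$ is feasible.

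\textbf{Upper bound.} We drop the positive term $1/\rho_0$, which gives $\delta\le 1/\chi_3$. The map $x\mapsto 1/x$ is convex on the positive reals, so Jensen's inequality gives $\chi_3\ge (N-2)/(e_2-\overline{e}_3-\delta)$. Combining the two, $\delta(N-2)\le e_2-\overline{e}_3-\delta$, that is, $\delta\le (e_2-\overline{e}_3)/(N-1)=c_u/(N-1)$.

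\textbf{Lower bound.} First, $c_l/N<e_2-e_3$: since $\sqrt{(1-e_3)/(1-e_2)}\ge 1$, we have $c_l\le e_2-e_3$, so Eq.~\eqref{eq:e3_condition} is automatically respected. Next we bound the right-hand side of \eqref{eq:plan delta form} from above. Using $e_i\le e_3$ gives $\chi_3\le (N-2)/(e_2-e_3-\delta)$, and $\rho_0=1-e_2+\delta$ is explicit. Feasibility of $\delta$ then follows from the single-variable inequality
\begin{align}
\frac{1}{\delta}\;\ge\;\frac{1}{1-e_2+\delta}+\frac{N-2}{e_2-e_3-\delta}. \notag
\end{align}
Clearing denominators turns this into a quadratic inequality in $\delta$, or in $c=N\delta$ after scaling. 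Its smaller root involves $\sqrt{(1-e_3)(1-e_2)}$, because the quadratic's coefficients combine $1-e_2$ and $1-e_3=(1-e_2)+(e_2-e_3)$.

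\textbf{Main obstacle.} The hard step is to verify that $c=c_l$ lies below this root for every $N$, not just asymptotically. Rationalizing shows that $c_l=2(e_2-e_3)\sqrt{1-e_2}/(\sqrt{1-e_3}+\sqrt{1-e_2})$ has exactly the form of such a root. We would first try substituting $\delta=c_l/N$, replacing $N-2$ by $N$, and checking the resulting polynomial sign directly. If the slack from $N-2$ versus $N$ is insufficient, we would instead bound $1/\rho_0$ and the $\chi_3$ term jointly via Cauchy--Schwarz, which is the natural source of a $(\sqrt{a}+\sqrt{b})^{-1}$ factor.

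\textbf{Asymptotics.} Once both bounds hold, Eq.~\eqref{eq:rho0m theta} is immediate. When $e_2>e_3$, both $c_l>0$ and $c_u\ge e_2-e_3>0$, so $\delta^m$ is sandwiched between two positive multiples of $1/N$.
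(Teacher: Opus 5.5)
There is a genuine gap in the lower bound; the upper bound is fine. Dropping $1/\rho_0$ and applying Jensen gives $\delta\le c_u/(N-1)$ directly. That is more direct than the paper, which solves the Jensen-relaxed equation $g_2=f$ exactly and then loosens the root. Your lower-bound reduction via $e_i\le e_3$ is exactly the paper's $g_1$.

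The gap is the step you call the main obstacle: the proposal never verifies that $\delta=c_l/N$ is feasible, and your first tactic does not do it. Clearing denominators in your single-variable inequality, feasibility of $\delta\in(0,e_2-e_3)$ follows from $p(\delta)\le 0$, where $p(\delta)=(N-2)\delta^2+(N-1)(1-e_2)\delta-(1-e_2)(e_2-e_3)$. Note that $c_l/N$ is not the root of $p$. $N$ times the positive root equals $\tfrac32 c_l$ at $N=3$ and tends to $e_2-e_3>c_l$ as $N\to\infty$. So $c_l$ must come from a relaxation that is uniform in $N$.

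Replacing $N-2$ by $N$ is too lossy for that. Bounding both coefficients of $p$ by $N$ gives only $p(c_l/N)\le c_l^2/N+(1-e_2)c_l-(1-e_2)(e_2-e_3)=c_l^2\bigl(\tfrac1N-\tfrac14\bigr)$, using the root identity below. That is positive at $N=3$. Putting $N$ in place of $N-2$ inside the inequality itself is worse still.

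The missing idea is to keep the $1/N^2$ scaling of the quadratic term. With $\delta=c/N$, and using $4(N-2)\le N^2$ and $N-1\le N$,
\[
p(c/N)=\tfrac{N-2}{N^2}c^2+\tfrac{N-1}{N}(1-e_2)c-(1-e_2)(e_2-e_3)\le\tfrac{c^2}{4}+(1-e_2)c-(1-e_2)(e_2-e_3).
\]
The right-hand side is an $N$-free quadratic. Its discriminant is $(1-e_2)^2+(1-e_2)(e_2-e_3)=(1-e_2)(1-e_3)$, and its positive root is $2\bigl(\sqrt{(1-e_2)(1-e_3)}-(1-e_2)\bigr)=c_l$. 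This is where the factor $\sqrt{(1-e_3)/(1-e_2)}$ in $c_l$ comes from.

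Hence $p(c_l/N)<0$ for every $N\ge 3$. Together with your check $c_l/N<e_2-e_3$, this gives $\rho_0^m\ge 1-e_2+c_l/N$, and no Cauchy--Schwarz step is needed. This is what the paper's ``further derivation'' of $\rho_0^l$ amounts to. Its displayed quadratic has leading coefficient $(N-1)/N^2\le\tfrac14$, which explains its remark about equality at $N=2$.

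With this step inserted, the rest of your argument goes through. That includes the $\Theta(1/N)$ conclusion from $c_u\ge e_2-e_3\ge c_l>0$ when $e_2>e_3$.
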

\begin{IEEEproof}
   We define two functions \(f(\rho_0) = \frac{\rho_0}{\rho_0 + e_2 - 1}\) and \(g(\rho_0) = 1 + \sum_{i=3}^{N} \frac{\rho_0}{1 - \rho_0 - e_i}\). \(f(\rho_0)\) is strictly decreasing, while \(g(\rho_0)\) is piecewise increasing within \(\rho_{0} \in (1 - e_i, 1 - e_{i+1})\) for \(2 \leq i \leq N-1\). We note that \(\rho_0^m\) is the solution to \(f(\rho_0) = g(\rho_0)\) when \(\rho_{0} \in (1 - e_2, 1 - e_3)\).

   To establish a lower bound, we define a strictly increasing function \(g_{1}(\rho_0) = 1 + \frac{(N-2)\rho_0}{1 - \rho_0 - e_3} \geq g(\rho_0)\) in the interval \((0, 1 - e_3)\), with equality if and only if \(e_i = e_3\) for all \(i > 3\).
   Due to the continuity of \(f(\rho_0)\), there exists a lower bound \(\rho_0^l \in (1 - e_2, 1 - e_3)\) for \(\rho_0^m\) such that \(f(\rho_0^l) \geq f(\rho_0^m)\). Solving from \(g_{1}(\rho_0) = f(\rho_0)\), we obtain
   \begin{align} \label{eq:rho0m lower origin}
   \!\!\! \rho_0^l = (1 - e_2) \frac{(N-2) + \sqrt{(N-2)^2 + 4(N-1)\frac{1-e_3}{1-e_2}}}{2(N-1)}.
   \end{align}
   By further derivation of \(\rho_0^l\) in Eq.~\eqref{eq:rho0m lower origin}, a concise lower bound is obtained in Eq.~\eqref{eq:max rho0 bounds} with equality if and only if \(N = 2\).
   
   For the upper bound of \(\rho_0\), we first derive a lower bound for \(g(\rho_0)\) using Jensen's inequality, which is \(g_2(\rho_0) = 1+(N-2)\big(1-\rho_{0}-\frac{1}{N-2}\sum_{i=3}^{N}e_{i}\big)^{-1}\rho_{0}\) 
   for \(\rho_0 \in (0, 1 - e_3)\). We solve \(g_2(\rho_0) = f(\rho_0)\) to determine the upper bound
   \begin{align}
   \!\!\! \rho_0^u = (1 - e_2) \frac{(N-3) + \sqrt{(N-3)^2 + 4(N-2)\frac{1 - \overline{e}_3}{1 - e_2}}}{2(N-2)}.
   \end{align}
   Since \(\sqrt{(N-3)^{2}+4(N-2)\frac{1-\overline{e}_{3}}{1-e_{2}}}\ge N-1\) holds for \(N \geq 2\), we have a more concise upper bound in Eq.~\eqref{eq:max rho0 bounds}.
   The asymptotics of \(\rho_0^m\) in Eq.~\eqref{eq:rho0m theta} is obtained by combining the lower and upper bounds in Eq.~\eqref{eq:max rho0 bounds}.
\end{IEEEproof}

In Theorem~\ref{thm:max rho0}, we derive the lower and upper bounds and the asymptotic approximation for \(\rho_0^m\). By combining \(\rho_1^m\) already obtained, we fill the correlation region \(\mathcal{C}_{0}\) to characterize the SDC.
From Eq.~\eqref{eq:rho0m theta}, we observe that for a vector source with \(\Rm\) in Eq.~\eqref{eq:correlation definition}, if it can be compressed at the Hadamard lower rate in Eq.~\eqref{eq:hadamard bound value} under constraints \(\{e_i\}_{i=1}^N\), the gap between the maximum peripheral correlation \(\rho_{0}^{m}\) and \(1-e_2\) decreases inversely with source length, where \(e_2\) is the mildest peripheral constraint. This implies a trade-off: a source with stronger correlations cannot be compressed at the Hadamard rate under excessively mild distortion constraints.

\subsection{Simulation Results}
\label{subsec:Simulation Results}
In this subsection, we compare our analytical results from Theorems~\ref{thm:RE_psd_conditions}, \ref{thm:sdc exponential asymp}, and \ref{thm:max rho0} with numerical simulations. 

For Theorem~\ref{thm:RE_psd_conditions}, Fig.~\ref{fig:iso_uniform_2D_N} plots the average compression rate per component in Eqs.~\eqref{eq:rd sdc iso} and \eqref{eq:alliso rate per asym} as a function of the number of components, with fixed distortion constraints \(e = e_i = 0.25\) for all \(i \in [N]\). we observe the gains in compression efficiency achieved by fully leveraging correlations. 
Compared to the independent case, the compression cost per component is significantly reduced due to correlations. 
For \(N = 80\) and \(\rho = 0.2\), the compression rate decreases by 13.5\%, for \(N = 120\) and \(\rho = 0.4\), it decreases by 34.2\%, and for \(N = 160\) and \(\rho = 0.6\), it decreases by 63.6\%. 
According to Eq.~\eqref{eq:alliso rate per asym}, as \(N\) increases, \(\overline{\mathbb{R}}_{\xv}([N], \ev)\) converges to \(\frac{1}{2} \log \frac{1-\rho}{e}\). In summary, effectively leveraging correlations in data is a crucial strategy for reducing storage and processing overhead.

For Theorem~\ref{thm:sdc exponential asymp}, Fig.~\ref{fig:SDC_prob_N_2} shows simulation results for \(P(\mathcal{A}_0)\) versus \(N\), based on \(10^{6}\) Monte Carlo trials for each \(N\). \(\{e_i\}_{i=1}^N\) are randomly sampled from \([0,1]\) and the central correlation \(\rho_1 = 0.45\) is fixed. 
The Monte Carlo results align with our \(N\)-fold integral analysis in Eqs.~\eqref{eq:SDC prob} and \eqref{eq:sdc exponential asymp}, showing that the probability of SDC satisfaction decays with \(N\) at a rate of \(\log(1-\rho_0)\). 
This highlights the need for future research to focus more on the case where SDC is not satisfied.

For Theorem~\ref{thm:max rho0}, in Fig.~\ref{fig:rho0m_e2_fixed}, we conduct simulations for \(\rho_0^m\) using results from \(10^{4}\) Monte Carlo trials, with \(\{e_i\}_{i=3}^{N}\) randomly sampled.   
We fix \(e_2 = 0.1\) and observe that \(\rho_0^m\) gradually converges to \(1 - e_2\) as \(N\) increases, while the analytical bounds in Eq.~\eqref{eq:max rho0 bounds} remain remarkably tight to the numerically obtained \(\rho_0^m\).
This verifies that our concise results achieve rather accurate approximations to the analytically intractable root of an \((N-1)\)-th degree equation.
\section{Conclusion}
\label{sec:Conclusion}
In this work, we have investigated the joint compression of a vector Gaussian source under the individual distortion criteria, refined existing results when the SDC is satisfied and obtained new theoretical results when the SDC is not satisfied. 
Our analysis has showed that efficient compression involves a lower-dimensional reconstruction when the SDC does not hold. 
Additionally, we have revealed that the probability of satisfying the SDC decays exponentially with source length, emphasizing the need to consider the case where the SDC is not met. 
Furthermore, we have found that to achieve compression at the Hadamard lower rate under milder distortion constraints, the source should exhibit weaker component-wise correlations.
More importantly, from the derived RDF with explicitly incorporated correlations, we have quantitatively characterized the gains of leveraging correlations between source components to reduce storage and processing costs.

\bibliographystyle{IEEEtran}
\bibliography{references}


\clearpage
\appendices


\end{document}